\documentclass[11pt]{article}
\usepackage{amsmath,amssymb,amsthm,mathtools}
\usepackage{geometry}
\usepackage{hyperref}
\usepackage{enumitem}
\usepackage{microtype}
\usepackage{mathrsfs}
\usepackage{bbm}
\title{Existence of Pure Strategy Nash Equilibria in Finite Noncooperative Games}
\author{Shravan Luckraz}
\author{Shravan Luckraz\\
School of Economics and CeDEx China, University of Nottingham Ningbo China\\
\texttt{shravan.luckraz@nottingham.edu.cn}}
\date{July 2026}

\newtheorem{theorem}{Theorem}[section]
\newtheorem{lemma}[theorem]{Lemma}
\newtheorem{proposition}[theorem]{Proposition}
\newtheorem{corollary}[theorem]{Corollary}
\newtheorem{definition}[theorem]{Definition}
\newtheorem{remark}[theorem]{Remark}

\begin{document}

\maketitle

\begin{abstract}
The classical existence result of Nash guarantees that every finite noncooperative game admits an equilibrium in mixed strategies, but it leaves open the question of when pure strategy equilibria exist. This paper develops a structural approach to that question by exploiting properties of the best-response correspondence on finite strategy sets. Building on recent work, we derive new sufficient conditions for the existence of pure strategy Nash equilibria in finite games. We introduce several broad classes of finite games for which pure equilibria are guaranteed, including a class of generalized potential games that generalizes unilaterally competitive games and a class characterized by the existence of an aggregate-payoff maximizer over an ordered set. Our results clarify the role of  acyclicity, and aggregation in producing pure equilibria and connect disparate sufficient-condition results in the literature into a unified framework.
\end{abstract}

\section{Introduction}

\subsection{Background and motivation}
The foundational contribution of Nash (1950, 1951) established that every finite noncooperative game admits at least one Nash equilibrium when players are allowed to randomize over their finite action sets. Nash's proof, by embedding the finite game into its mixed-extension and applying a Kakutani-type fixed-point argument, remains central to game theory. However, the mixed-strategy existence theorem does not address whether equilibria can be realized in pure strategies; that is, whether there exist action profiles composed solely of deterministic choices that are mutual best responses. The distinction is not merely technical: pure strategy equilibria are often more interpretable, easier to compute, and more relevant for applications where randomization is implausible or costly.

Despite extensive work on equilibrium existence in broader settings (for example, continuous strategy spaces, discontinuous payoffs, and infinite-player models) characterizations that are both necessary and sufficient for pure strategy equilibria in finite games remain scarce. The absence of a natural topology on finite strategy sets complicates the direct application of classical fixed-point theorems and topological methods (Fan 1952; Tarski 1955; Abian 1968). Consequently, researchers have pursued alternative structural conditions like monotonicity, supermodularity, acyclicity, and aggregation properties that can guarantee pure equilibria without relying on continuity or convexity assumptions (Topkis 1998; Vives 1990; Monderer and Shapley 1996; Rosenthal 1973).

This paper addresses the gap by developing a framework that uses the combinatorial and order-theoretic structure of best-response correspondences on finite strategy profiles to derive new sufficient conditions for the existence of pure strategy Nash equilibria. Our approach generalizes and extends several strands of the literature, clarifying when and why pure equilibria arise in finite strategic settings (Luckraz 2014; Luckraz 2022a; Luckraz 2022b; Luckraz 2025).

\subsection{Literature review}
A large body of work has established sufficient conditions for pure strategy equilibria in various classes of games. Rosenthal (1973) identified potential games as a rich source of pure equilibria, showing that the existence of a potential function yields pure Nash equilibria via local maximizers. Monderer and Shapley (1996) further developed the theory of potential games, linking potential functions to convergence properties of learning dynamics. Supermodular games and strategic complementarities, studied by Topkis (1998) and Vives (1990), provide another robust route to pure equilibria: monotone best-response maps on lattices admit fixed points by Tarski's theorem (Tarski 1955), and these fixed points correspond to pure Nash equilibria.

Other researchers have relaxed continuity and convexity assumptions to obtain existence results in discontinuous or nonconvex environments. Reny (1999) and Tian (2009) provided existence results under weaker payoff regularity conditions, while Kukushkin (1994, 2004, 2007) examined best-response dynamics and aggregation properties in finite and discrete settings. Abian's fixed-point considerations (Abian 1968) and subsequent refinements (Luckraz 2014) have clarified the role of order and isotonicity assumptions in discrete fixed-point arguments.

More recent work has focused on classes of finite games that are not covered by classical potential or supermodular frameworks. Duersch, Oechssler, and Schipper (2012a, 2012b) studied symmetric two-player zero-sum and imitation dynamics, deriving sufficient conditions for pure equilibria in quasi-concave settings. Research by Iimura and Watanabe (2014), Iimura and Watanabe (2016), Iimura, Maruta, and Watanabe (2019) and Iimura (2020) have advanced the theory of unilaterally competitive and weakly unilaterally competitive games, showing that such strategic structures often guarantee pure equilibria and, in some cases, that the guarantee extends to games with three or more players (Iimura 2020). Amir and De Castro (2017) and Fabrikant et al.\ (2013) have contributed to understanding quasi-monotonic best responses and weak acyclicity, respectively, further enriching the catalog of conditions that produce pure equilibria.

Despite these advances, the literature lacks a unified, fully general characterization of pure strategy Nash equilibria for finite games. Existing results are typically sufficient conditions tailored to particular structural assumptions like potentiality, supermodularity, unilaterally competitive behavior, or aggregation, without providing a single set of necessary and sufficient criteria that applies across finite games. This paper aims to partially fill that gap by focusing directly on the best-response correspondence as a combinatorial object and deriving conditions that are sufficient for the existence of pure equilibria.

\subsection{Challenges in finite strategy spaces}
Two interrelated challenges make the pure-equilibrium existence problem in finite games difficult. First, finite strategy sets lack the topological and convex structure that underpins many classical fixed-point theorems. Without a continuum of strategies, one cannot directly invoke Brouwer or Kakutani arguments on the original strategy space; instead, one must either pass to mixed strategies (as Nash did) or exploit discrete order and lattice structures when they exist (Tarski 1955; Topkis 1998). Second, the combinatorial complexity of best-response correspondences in finite games can be substantial: best-response sets may be multi-valued, non-monotone, and interdependent across players in ways that preclude simple aggregation.

To overcome these obstacles, our analysis treats the best-response map as a discrete multivalued mapping on the finite product of players' action sets and studies its fixed points using order-theoretic and combinatorial tools. This perspective allows us to identify structural features such as acyclicity, monotone selection properties, and the existence of global maximizers of suitably defined aggregate payoffs that ensure the presence of pure equilibria even in the absence of continuity or convexity.

\subsection{Approach and main contributions}
The core idea of this paper is to give new sufficient conditions for the existence of pure strategy equilibria through the mathematical properties and order structure of the best-response correspondence. First, we provide new sufficient conditions for the existence of pure strategy Nash equilibria in finite games expressed in terms of fixed points of the best-response correspondence and combinatorial acyclicity properties of best-response graphs. These conditions unify and extend several disparate sufficient-condition results in the literature.
Second, we introduce a broad class of finite games that generalizes Iimura et al.'s  unilaterally competitive framework. For this class we show that pure equilibria exist under weaker assumptions than previously required, and we identify the precise structural features responsible for equilibrium existence.
Third, we propose an aggregation-based criterion: when the game admits an aggregate payoff function defined over an ordered set and that aggregate payoff attains a maximizer on that set, a pure strategy equilibrium exists. This result connects aggregation arguments (Kukushkin 2004; Fabrikant et al.\ 2013) with lattice-theoretic fixed-point ideas (Tarski 1955; Topkis 1998).
Fourth, we provide a sufficient condition for existence of pure strategy equilibria for finite games by mapping equilibrium existence to properties of best-response graphs and order-preserving selections. This condition clarifies the relationships among potential games, supermodular games, unilaterally competitive games, and other classes studied in the literature (Rosenthal 1973; Monderer and Shapley 1996; Kats and Thisse 1992; Duersch et al.\ 2012a, 2012b; Iimura 2014, 2016, 2019, 2020).

Throughout, we emphasize constructive and verifiable conditions that can be checked on finite game representations. Our results have implications for algorithmic equilibrium computation, learning dynamics, and the design of games and mechanisms where pure outcomes are desirable.

\subsection{Organization of the paper}
The remainder of the paper is organized as follows. Section 2 formalizes the problem of pure strategy equilibrium existence in finite games, introduces notation, and states the main theorems. Sections 3,4,5 and 6 introduce a class of aggregative games, payoff sum separable games, aggregate monotone games respectively. Ordinal potential games and the general potential system are introduced in Section 7, while sections 8 and 9 consdier supermodual games and unilaterally competitive games respectively. Section 10 gives some remarks on computation and algorithms while section 11 concludes.

\section{Model and notation}
Let \(N=\{1,\dots,n\}\) be the set of players. Each player \(i\) has a finite
nonempty action set \(S_i\). Denote \(S=\prod_{i=1}^n S_i\) and write
\(s=(s_1,\dots,s_n)\in S\). Payoffs are \(u_i:S\to\mathbb R\). A profile
\(s^\ast\) is a pure Nash equilibrium (NE) iff for every \(i\) and every
\(x_i\in S_i\),

\[
u_i(s_i^\ast,s_{-i}^\ast)\ge u_i(x_i,s_{-i}^\ast).
\]

Define the aggregate payoff \(U(s)=\sum_{i=1}^n u_i(s)\).

\begin{definition}[Best reply and UNBR]
For \(s\in S\) define \(BR_i(s)=\arg\max_{x\in S_i} u_i(x,s_{-i})\) and
\(BR(s)=\prod_i BR_i(s)\). Define the unilateral best‑response correspondence
\(UNBR:S\twoheadrightarrow S\) by

\[
UNBR(s)=
\begin{cases}
\{(x_i,s_{-i}): x_i\in BR_i(s),\ u_i(x_i,s_{-i})>u_i(s)\ \text{for some }i\},\\
\{s\},\quad\text{if no profitable unilateral deviation exists.}
\end{cases}
\]

\end{definition}

Note \(s\in UNBR(s)\) iff \(s\) is a NE.

\section{Exact potentials and path independence}
We present the discrete integrability condition that characterizes exact
potentials in finite games.

\begin{definition}[Exact potential]
A function \(P:S\to\mathbb R\) is an \emph{exact potential} for the game if
for every \(i\), every \(s_{-i}\), and every \(x,x'\in S_i\),

\[
P(x',s_{-i})-P(x,s_{-i})=u_i(x',s_{-i})-u_i(x,s_{-i}).
\]

\end{definition}

Define unilateral increments

\[
\Delta_i(x\to x';s_{-i}) := u_i(x',s_{-i})-u_i(x,s_{-i}).
\]

\begin{theorem}[Path independence characterization]
\label{thm:PI}
An exact potential \(P\) exists iff for every pair \(i\ne j\), every
\(s_{-ij}\), and every \(x,x'\in S_i\), \(y,y'\in S_j\),

\[
\Delta_i(x\to x';y,s_{-ij})+\Delta_j(y\to y';x',s_{-ij})
=\Delta_j(y\to y';x,s_{-ij})+\Delta_i(x\to x';y',s_{-ij}).
\tag{PI}
\]

\end{theorem}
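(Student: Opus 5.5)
Necessity is a direct computation; sufficiency is a path-independence argument that builds $P$ along coordinate paths and uses (PI) to show the choice of path does not matter.

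\emph{Necessity.} Suppose $P$ is an exact potential. Each $\Delta$ in (PI) is a difference of $P$ values. The left side is
\[
[P(x',y,s_{-ij})-P(x,y,s_{-ij})]+[P(x',y',s_{-ij})-P(x',y,s_{-ij})]=P(x',y')-P(x,y),
\]
suppressing $s_{-ij}$. The right side telescopes to the same quantity. So (PI) holds.

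\emph{Sufficiency.} Fix a base profile $z\in S$. For any $s\in S$, walk from $z$ to $s$ by changing coordinates one at a time, in the order $1,2,\dots,n$. Define $P(s)$ as the sum of the unilateral increments $\Delta_i$ along this walk. More generally, define the value $V(\gamma)$ of any path $\gamma$ of single-player moves as the sum of its increments. It then suffices to prove one claim: \emph{every closed path of unilateral moves has $V=0$}. Given the claim, $P(s)$ equals $V$ of any path from $z$ to $s$. For a single move $(x,s_{-i})\to(x',s_{-i})$, append this move to a path reaching $(x,s_{-i})$; this gives a path to $(x',s_{-i})$. Hence $P(x',s_{-i})-P(x,s_{-i})=\Delta_i(x\to x';s_{-i})$, which is exactly the definition of an exact potential.

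\emph{Main obstacle: reducing closed paths to elementary squares.} Condition (PI) says exactly that every elementary rectangle $(x,y)\to(x',y)\to(x',y')\to(x,y')\to(x,y)$ in two coordinates $i\ne j$ has zero value, with other coordinates held fixed. Before using this, I record three preliminary facts.
\begin{itemize}[nosep]
\item Reversing a move negates its increment: $\Delta_i(x'\to x)=-\Delta_i(x\to x')$.
\item Two consecutive moves by the same player $i$ can be merged into one, since $\Delta_i(x\to x')+\Delta_i(x'\to x'')=\Delta_i(x\to x'')$.
\item A move $x\to x$ contributes zero.
\end{itemize}
Using these facts, I would show by induction that any path from $z$ to $s$ has the same value as the canonical ordered path. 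The rectangle identity lets me commute a move of player $j$ past an adjacent move of player $i$ without changing $V$. Specifically, replace $(x,y)\to(x',y)\to(x',y')$ with $(x,y)\to(x,y')\to(x',y')$; (PI) is precisely the equality of these two values. By repeated adjacent swaps, in bubble-sort fashion, the path is reordered so that all of player 1's moves come first, then player 2's, and so on. Merging consecutive same-player moves then gives a path with at most one move per player. That path must be the canonical one, because its endpoints are $z$ and $s$. Each swap or merge strictly reduces a finite measure (inversions plus path length), so the procedure terminates.

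This gives well-definedness of $P$, and sufficiency follows.
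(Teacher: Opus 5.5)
Your proof is correct and has the same overall structure as the paper's. Both define $P$ by summing unilateral increments along a path from a base profile. Both use (PI) as the rule that lets adjacent moves by different players commute. Both get the exact-potential identity by appending one move.

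The difference is in how path independence is established. The paper relies on its adjacent swap lemma (Lemma A.1), which says any two move sequences from $s^0$ to $s$ are related by swaps alone, because they "have the same multiset of moves for each player." That premise is false. Swaps preserve each player's own sequence of moves, but two paths can differ there. For example, player $i$ may go $a\to b\to c$ in one path and $a\to c$ in another. Or player $i$ may make an excursion $a\to b$ followed later by $b\to a$ in one path and not move at all in another. So the paper's argument, as written, only compares paths that are already swap-equivalent.

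Your normal-form argument supplies exactly the missing step:
\begin{itemize}
\item bubble-sort the moves by player;
\item telescope consecutive same-player moves via $\Delta_i(x\to x')+\Delta_i(x'\to x'')=\Delta_i(x\to x'')$;
\item discard null moves.
\end{itemize}
This reduces every path to the single canonical coordinate-ordered path. You never need to relate two arbitrary paths to each other. Your route costs one trivial extra identity and gives a complete proof of sufficiency.
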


\begin{proof}
Necessity: If \(P\) is an exact potential then the total change in \(P\) when
players \(i\) and \(j\) perform the two unilateral moves in either order is
the same; substituting \(P\)-differences by \(\Delta\) yields (PI).

Sufficiency: (construction of \(P\))
Fix an arbitrary reference profile \(s^0\in S\). For any profile \(s\in S\)
choose any finite sequence of unilateral moves

\[
s^0 \to s^{(1)} \to s^{(2)} \to \cdots \to s^{(T)} = s,
\]

where each move \(s^{(t)}\to s^{(t+1)}\) is a unilateral change by some
player \(i_t\) from action \(x_t\) to \(x_t'\) while opponents are fixed at
the corresponding \(s^{(t)}_{-i_t}\). Define

\[
P(s) := \sum_{t=0}^{T-1} \Delta_{i_t}(x_t\to x_t'; s^{(t)}_{-i_t}).
\]

We must show \(P(s)\) is independent of the chosen sequence and that for any
unilateral move of player \(i\) from \(x\) to \(x'\) with opponents fixed at
\(s_{-i}\),

\[
P(x',s_{-i})-P(x,s_{-i}) = \Delta_i(x\to x'; s_{-i}).
\]

\textbf{Path independence.} Let two sequences from \(s^0\) to \(s\) be given.
By Lemma A.1 they can be transformed into each other by a finite sequence of
adjacent swaps of moves by different players. It suffices to show that each
adjacent swap leaves the sum invariant. Consider two consecutive moves in a
sequence: first \(i\) moves \(x\to x'\) while opponents at \(s_{-ij}\) and
then \(j\) moves \(y\to y'\) while opponents at the updated profile. The
sum of increments for the two moves in that order is

\[
\Delta_i(x\to x'; y, s_{-ij}) + \Delta_j(y\to y'; x', s_{-ij}).
\]

If we swap the order, the sum becomes

\[
\Delta_j(y\to y'; x, s_{-ij}) + \Delta_i(x\to x'; y', s_{-ij}).
\]

(PI) asserts these two sums are equal. Therefore each adjacent swap leaves
the total sum invariant, and by induction the entire sequence sum is
invariant. Thus \(P(s)\) is well defined.

\textbf{Exact potential identity.} Fix \(s\) and consider a unilateral move
by player \(i\) from \(x\) to \(x'\) with opponents fixed at \(s_{-i}\).
Take any sequence from \(s^0\) to \(s\) and append the single move
\(x\to x'\) to obtain a sequence to \((x',s_{-i})\). By path independence,
the difference \(P(x',s_{-i})-P(x,s_{-i})\) equals the increment of the last
move, which is \(\Delta_i(x\to x'; s_{-i})\). This verifies the exact
potential identity.

\end{proof}

\begin{remark}
(PI) is the discrete analogue of equality of mixed partials in the
differentiable case. It is both necessary and sufficient and is directly
checkable in finite games.
\end{remark}

\section{Aggregative games: integrability without differentiability}
We treat aggregative games where each payoff depends on own action and an
aggregate of all actions.

\begin{definition}[Aggregative game]
An aggregator is a map \(G:S\to\mathcal A\). The game is \emph{aggregative}
if there exist functions \(\tilde u_i:S_i\times\mathcal A\to\mathbb R\) such
that \(u_i(s)=\tilde u_i(s_i,G(s))\) for all \(s\in S\).
\end{definition}

We focus on the common scalar sum aggregator \(G(s)=\sum_{k=1}^n g_k(s_k)\)
with scalar functions \(g_k\) (the canonical case is \(g_k(s_k)=s_k\)). The
key observation is that unilateral moves change the aggregate only through
the moved coordinate.

\begin{theorem}[Aggregative integrability condition]
\label{thm:agg}
Consider an aggregative game with scalar aggregator \(G(s)=\sum_k g_k(s_k)\)
and payoffs \(u_i(s)=\tilde u_i(s_i,G(s))\). Then the game admits an exact
potential iff for every pair \(i\ne j\), every baseline aggregate \(A\), and
every \(x,x'\in S_i\), \(y,y'\in S_j\),

\[
\begin{aligned}
&\big[\tilde u_i(x',A+\delta_j)-\tilde u_i(x,A+\delta_j)\big]-\big[\tilde u_i(x',A)-\tilde u_i(x,A)\big]\\
&\quad=\big[\tilde u_j(y',A+\delta_i)-\tilde u_j(y,A+\delta_i)\big]-\big[\tilde u_j(y',A)-\tilde u_j(y,A)\big],
\end{aligned}
\tag{AGG-PI}
\]

where \(\delta_i=g_i(x')-g_i(x)\) and \(\delta_j=g_j(y')-g_j(y)\).
\end{theorem}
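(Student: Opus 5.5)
The plan is to obtain Theorem~\ref{thm:agg} as a direct specialization of Theorem~\ref{thm:PI}, by rewriting (PI) in the aggregative notation. I will work with (PI) rearranged as a cross-difference identity:
\[
\Delta_i(x\to x';y',s_{-ij})-\Delta_i(x\to x';y,s_{-ij})
=\Delta_j(y\to y';x',s_{-ij})-\Delta_j(y\to y';x,s_{-ij}).
\]
This form is algebraically equivalent to (PI). Theorem~\ref{thm:PI} then says that an exact potential exists iff this identity holds for all \(i\ne j\), all \(s_{-ij}\), all \(x,x'\in S_i\) and all \(y,y'\in S_j\). It therefore suffices to show that, in the aggregative game, the left and right sides are exactly the two bracketed differences in (AGG-PI).

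\textbf{Step 1 (convention for \(A\)).} To read the statement literally, I will interpret the second argument of \(\tilde u_i\) in (AGG-PI) as the aggregate level faced by player \(i\) while \(i\)'s own coordinate is being varied between \(x\) and \(x'\). Concretely, for fixed \(s_{-ij}\) and fixed \(y\), set \(A:=\sum_{k\ne i,j}g_k(s_k)+g_j(y)\) plus whatever own contribution the convention assigns. The convention must be applied to both players at once: the same \(A\) must be the level faced by \(j\) when \(x\) is held fixed, and replacing \(x\) by \(x'\) must shift it by \(\delta_i\), just as replacing \(y\) by \(y'\) shifts \(i\)'s level by \(\delta_j\). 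This uses the observation before the theorem: a unilateral move changes \(G\) only through the moved coordinate, \(G(x',s_{-i})-G(x,s_{-i})=g_i(x')-g_i(x)\).

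\textbf{Step 2 (substitution).} With this convention I will verify four identities:
\[
\Delta_i(x\to x';y,s_{-ij})=\tilde u_i(x',A)-\tilde u_i(x,A),\qquad
\Delta_i(x\to x';y',s_{-ij})=\tilde u_i(x',A+\delta_j)-\tilde u_i(x,A+\delta_j),
\]
together with the symmetric pair for \(j\), with \(A\) and \(A+\delta_i\). Substituting these into the cross-difference form of (PI) yields (AGG-PI) verbatim.

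\textbf{Step 3 (quantifiers).} As \(s_{-ij}\), \(x,x',y,y'\) range over all profiles, \(A\) ranges over all baseline aggregates that are realized. So "for every \(s_{-ij}\)" in (PI) corresponds to "for every baseline aggregate \(A\)" in (AGG-PI), and both directions of the iff transfer from Theorem~\ref{thm:PI}.

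\textbf{Main obstacle.} The hard step is Step 1. In the literal form \(u_i(s)=\tilde u_i(s_i,G(s))\), changing \(s_i\) from \(x\) to \(x'\) also moves the aggregate by \(\delta_i\). The bracket \(\tilde u_i(x',A)-\tilde u_i(x,A)\) is then a genuine payoff increment only if \(\tilde u_i(\cdot,A)\) is read with \(A\) excluding, or already absorbing, \(i\)'s own contribution. What I would try first is a reparametrization
\[
v_i(s_i,B):=\tilde u_i\bigl(s_i,\,B+g_i(s_i)\bigr),\qquad B:=G(s)-g_i(s_i),
\]
so that player \(i\)'s payoff is a function of own action and the others' aggregate \(B\). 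In this form \(i\)'s own move leaves \(B\) unchanged, and \(j\)'s move shifts \(B\) by exactly \(\delta_j\), as (AGG-PI) requires. I will then state (AGG-PI) in terms of these functions, keeping the symbol \(\tilde u_i\) as the paper does. The remaining issue is to choose one baseline \(A\) that serves both players symmetrically; if that fails, I will state the condition for each realized configuration \((s_{-ij},x,y)\) and check that the bookkeeping still matches (PI) term by term.
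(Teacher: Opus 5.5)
Your route, rewriting (PI) as a cross-difference and substituting the aggregative form, is the same as the paper's. The obstacle you flag in Step~1 cannot be overcome, because no convention for \(A\) makes the four identities of Step~2 hold with one common baseline. Under the hypothesis \(u_i(s)=\tilde u_i(s_i,G(s))\), the natural common baseline is \(A=G(x,y,s_{-ij})\). Since \(i\)'s own move shifts the aggregate, \(\Delta_i(x\to x';y,s_{-ij})=\tilde u_i(x',A+\delta_i)-\tilde u_i(x,A)\) and \(\Delta_i(x\to x';y',s_{-ij})=\tilde u_i(x',A+\delta_i+\delta_j)-\tilde u_i(x,A+\delta_j)\). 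Substituting into the cross-difference form of (PI) therefore gives
\[
\begin{aligned}
&\big[\tilde u_i(x',A+\delta_i+\delta_j)-\tilde u_i(x,A+\delta_j)\big]-\big[\tilde u_i(x',A+\delta_i)-\tilde u_i(x,A)\big]\\
&\quad=\big[\tilde u_j(y',A+\delta_i+\delta_j)-\tilde u_j(y,A+\delta_i)\big]-\big[\tilde u_j(y',A+\delta_j)-\tilde u_j(y,A)\big],
\end{aligned}
\]
which is not (AGG-PI).

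Relabelling your \(v_i\) as \(\tilde u_i\) changes the hypothesis, and it still fails. Each bracket does become a genuine increment, but \(i\)'s baseline is \(C+g_j(y)\) and \(j\)'s is \(C+g_i(x)\), with \(C=\sum_{k\ne i,j}g_k(s_k)\). These differ unless \(g_i(x)=g_j(y)\). Your fallback of stating the condition configuration by configuration proves a different theorem.

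No bookkeeping can close this, because (AGG-PI) as displayed is neither necessary nor sufficient. Take \(n=2\), \(S_1=S_2=\{0,1,2\}\), \(g_k(s_k)=s_k\) and \(\tilde u_2\equiv0\).
\begin{itemize}
\item \textbf{Not necessary.} Let \(\tilde u_1(s_1,A)=(A-s_1)^2\), so \(u_1(s)=s_2^2\). Then \(P\equiv0\) is an exact potential. Yet at \(x=0,x'=1,y=1,y'=2\) and \(A=G(0,1)=1\), the left side of (AGG-PI) is \((1-4)-(0-1)=-2\), while the right side is \(0\). All four \(\tilde u_1\) values here are realized payoffs, so this is not an artifact of how \(\tilde u_1\) is extended.
\item \textbf{Not sufficient.} Let \(\tilde u_1(s_1,A)=A^2\). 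Every bracket in (AGG-PI) vanishes, so the condition holds. But \(u_1(1,y)-u_1(0,y)=1+2y\) varies with \(y\) while \(u_2\) is constant, so (PI) fails and there is no exact potential.
\end{itemize}

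The paper's proof has the same gap. Appendix~A.5 computes exactly the increments above and arrives at \(L=R\), which is the shifted identity. It then asserts that cancellation yields ``exactly (AGG-PI)'', which is false.

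What is actually true is the shifted identity, quantified over all \(s_{-ij}\), or equivalently over \(A\) realized as \(G(x,y,s_{-ij})\). Your Steps~1--3 prove it verbatim with the common baseline \(A=G(x,y,s_{-ij})\) and no reparametrization. Quantifying over arbitrary \(A\in\mathcal A\) would bring in values of \(\tilde u_i\) that no profile realizes, so necessity could fail. The shifted version is also the one that matches the paper's subsequent remark. Its differential limit is \(\tilde u_{i,sA}+\tilde u_{i,AA}=\tilde u_{j,sA}+\tilde u_{j,AA}\), whereas the displayed (AGG-PI) would only give \(\tilde u_{i,sA}=\tilde u_{j,sA}\).
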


\begin{proof}
Write increments using the aggregator. When player \(i\) moves from \(x\) to
\(x'\) while others fixed so the baseline aggregate is \(A\), the increment
is \(\Delta_i(x\to x';A)=\tilde u_i(x',A+\delta_i)-\tilde u_i(x,A)\) where
\(\delta_i=g_i(x')-g_i(x)\). Substituting these expressions into (PI) and
simplifying yields (AGG-PI). The algebra is reversible, so (AGG-PI) is
necessary and sufficient for (PI) and hence for existence of an exact
potential.
\end{proof}

\begin{remark}
(AGG-PI) is a finite, combinatorial condition that replaces mixed partial
symmetry in differentiable models. When \(\tilde u_i\) are \(C^2\) and
\(g_k(s_k)=s_k\), (AGG-PI) reduces to the mixed derivative equality
\(u_{i,sA}+u_{i,AA}=u_{j,sA}+u_{j,AA}\) used in the differentiable literature.
\end{remark}

\section{Payoff‑sum separable reduction}
Using Luckraz's (2022b) result, we show how to perturb payoffs slightly to obtain a payoff‑sum separable
representative that preserves best replies (argmax payoff equivalence).

\begin{definition}[Argmax payoff equivalence]
Two games \(\Gamma\) and \(\Gamma'\) on the same strategy sets are
\emph{argmax payoff equivalent} if for every \(i\) and every \(s_{-i}\),

\[
\arg\max_{x\in S_i} u_i(x,s_{-i})=\arg\max_{x\in S_i} u_i'(x,s_{-i}).
\]

\end{definition}

\begin{theorem}[Constructive payoff‑sum separable representative]
\label{thm:perturb}
Let \(\Gamma=(N,\{S_i\},\{u_i\})\) be a finite game. There exists a game
\(\Gamma'=(N,\{S_i\},\{u_i'\})\) argmax payoff equivalent to \(\Gamma\) such
that \(U'(s)=\sum_i u_i'(s)\) is injective on \(S\). Moreover \(\Gamma'\)
can be constructed by arbitrarily small profile‑wise perturbations that
preserve each player's best‑reply sets.
\end{theorem}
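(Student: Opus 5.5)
The plan is to perturb each payoff by a tiny profile-dependent term that is constant in the player's own action, so best replies are untouched, while the sum \(U'\) becomes injective. Concretely, set
\[
u_i'(s)=u_i(s)+\varepsilon\,\phi_i(s_{-i}),
\]
where \(\phi_i\) depends only on \(s_{-i}\). Then for every \(i\) and \(s_{-i}\), the map \(x\mapsto u_i'(x,s_{-i})\) differs from \(x\mapsto u_i(x,s_{-i})\) by a constant. Hence the argmax sets coincide exactly, and argmax payoff equivalence holds for every \(\varepsilon\), with no smallness needed. The only remaining work is to choose the \(\phi_i\) so that
\[
U'(s)=U(s)+\varepsilon\sum_i\phi_i(s_{-i})
\]
is injective on the finite set \(S\). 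Smallness of \(\varepsilon\) is only needed for the ``arbitrarily small'' clause: \(|u_i'-u_i|\le\varepsilon\max|\phi_i|\), which can be made below any prescribed tolerance.

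The hardest step is arranging injectivity. When \(n\ge 2\), any profile \(s\) is determined by the collection \((s_{-1},\dots,s_{-n})\), since \(s_i\) appears in \(s_{-j}\) for any \(j\neq i\). Construct the terms as follows.
\begin{itemize}
\item Fix an injective labelling \(\ell_j:S_j\to\{0,\dots,m-1\}\), where \(m=\max_j|S_j|\).
\item Let \(\phi_1(s_{-1})\) be a base-\(m\) encoding of \((s_2,\dots,s_n)\), and let \(\phi_2(s_{-2})\) carry \(\ell_1(s_1)\) in a higher digit block. Set all other \(\phi_i\equiv 0\).
\item Then \(\Phi(s):=\sum_i\phi_i(s_{-i})\) is injective on \(S\), because the digit blocks do not overlap.
\end{itemize}
With \(\Phi\) injective, \(U+\varepsilon\Phi\) fails to be injective only if \(\varepsilon\) solves \(U(s)-U(t)=\varepsilon(\Phi(t)-\Phi(s))\) for some \(s\neq t\). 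Each such pair has \(\Phi(t)\neq\Phi(s)\), so it excludes at most one value of \(\varepsilon\). There are finitely many pairs, so all but finitely many \(\varepsilon\in(0,\bar\varepsilon)\) work. Choose one.

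The case \(n=1\) is degenerate: \(s_{-1}\) is empty, so \(\phi_1\) is a constant and cannot separate profiles. Here \(U=u_1\), and injectivity would force the strict ranking to separate ties within player 1's argmax. I will therefore treat \(n\ge 2\) as the intended setting, which is the game-theoretic case, and note the caveat that with a single player injectivity with identical argmax is impossible if the argmax has more than one element. If the paper intends Luckraz (2022b) to be invoked, I would cite it for this construction and present the above as the explicit verification.
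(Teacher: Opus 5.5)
Your proof is correct for $n\ge 2$, and it takes a genuinely different route from the paper's.

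The paper perturbs each $u_i(s)$ independently by some $\epsilon_i(s)$ with $|\epsilon_i(s)|<\delta<\gamma/4$, where $\gamma$ is a minimal positive own-action payoff gap. It then re-adjusts player 1's payoff profile by profile so that the totals equal distinct targets $a_k$, and argues that argmax sets survive because such small perturbations cannot reverse a strict inequality. You instead restrict to perturbations $\varepsilon\phi_i(s_{-i})$ that are constant in the mover's own action. Every unilateral payoff difference is then unchanged, so argmax sets are preserved exactly for every $\varepsilon$. Injectivity comes from the injective digit encoding $\Phi=\sum_i\phi_i$, plus the observation that each pair $s\neq t$ excludes at most one value of $\varepsilon$.

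Your route buys more than elegance. The paper's gap argument protects only strict inequalities. Its profile-wise perturbations, including the final adjustment to player 1, which depends on the full profile, can break a tie between two best replies and so shrink $BR_i$. The paper's remark that tie-breaking is ``acceptable'' therefore does not deliver argmax equivalence when some best-reply set has more than one element. Your construction has no such problem and needs no gap computation. It also preserves much more than argmax sets: all own-action payoff differences, hence exact and ordinal potentials, improvement paths, and $UNBR$.

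The paper's perturbations can realize essentially any small change of $U$, whereas yours are confined to functions of the form $\sum_i\phi_i(s_{-i})$. For the theorem as stated that extra freedom is unnecessary, and it is exactly what creates the tie problem.

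Your caveat about $n=1$ is also correct, and it is a defect of the statement itself rather than of your method. With a single player $U'=u_1'$, so injectivity forces a unique maximizer. The statement therefore fails whenever $u_1$ has several maximizers.
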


\begin{proof}
Enumerate \(S=\{s^{(1)},\dots,s^{(M)}\}\). Choose strictly increasing target
totals \(a_1<\dots<a_M\) (for example, take \(a_k=k\)). We will set
\(u_i'(s^{(k)})=u_i(s^{(k)})+\epsilon_i^{(k)}\) with carefully chosen
\(\epsilon\)'s.

\textbf{Stage 1 (preserve best replies).} For each player \(i\) and each
fixed \(s_{-i}\), let

\[
\gamma_{i,s_{-i}}=\min\{|u_i(x,s_{-i})-u_i(x',s_{-i})|: x\ne x'\}.
\]

If for some \(i,s_{-i}\) all payoffs coincide then set \(\gamma_{i,s_{-i}}=1\)
(arbitrary positive number) and we will choose perturbations smaller than
any relevant gap elsewhere. Let

\[
\gamma=\min_{i,s_{-i}:\ \gamma_{i,s_{-i}}>0}\gamma_{i,s_{-i}}.
\]

Choose \(\delta\in(0,\gamma/4)\). For every profile \(s^{(k)}\) and every
player \(i\) set \(|\epsilon_i^{(k)}|<\delta\). Then for any fixed \(s_{-i}\)
the ordering of \(u_i'(\cdot,s_{-i})\) over \(S_i\) equals that of
\(u_i(\cdot,s_{-i})\) because perturbations are smaller than half the minimal
gap. Hence best‑reply sets are preserved.

\textbf{Stage 2 (injective totals).} For each profile \(s^{(k)}\) adjust one
designated player's perturbation (say player 1) by an additional amount
\(\eta^{(k)}\) so that

\[
\sum_{i=1}^n u_i'(s^{(k)}) = a_k.
\]

Because the initial perturbations are bounded by \(\delta\) and the original
totals \(\sum_i u_i(s^{(k)})\) differ by at least some finite amount (or we
can choose \(a_k\) sufficiently separated), we can choose \(\eta^{(k)}\) with
\(|\eta^{(k)}|<\delta\) to achieve the target total. Doing so preserves the
ordering of player 1's payoffs given any fixed opponents' profile because the
total adjustment per profile is smaller than the minimal gap used in Stage 1.
Thus best‑reply sets remain unchanged.

Therefore \(\Gamma'\) is argmax payoff equivalent to \(\Gamma\) and
\(U'(s^{(k)})=a_k\) are distinct, so \(U'\) is injective. The construction is
explicit and algorithmic.
\end{proof}

\begin{remark}
The perturbation magnitudes can be chosen algorithmically from the finite set
of payoff gaps; the construction is robust and preserves argmax sets even
when some payoff gaps are zero by breaking ties with arbitrarily small
perturbations.
\end{remark}

\section{Aggregate Monotone Property (AMP) and existence}
We introduce AMP and show it guarantees existence of pure NE under
payoff‑sum separability.

\begin{definition}[Aggregate Monotone Property (AMP)]
The game satisfies AMP at profile \(s\) if whenever a unilateral best‑reply
deviation \(s\to s'\in UNBR(s)\) strictly increases the deviator's payoff,
then the total payoff \(U\) either strictly increases for every such
deviation from \(s\) or strictly decreases for every such deviation from \(s\).
\end{definition}

AMP is a local condition that constrains how unilateral improvements affect
the global aggregate \(U\).

\begin{theorem}[AMP implies existence of NE]
\label{thm:AMP}
Suppose \(\Gamma\) is payoff‑sum separable (i.e., \(U\) injective). If \(UNBR\)
satisfies AMP at every profile, then \(\Gamma\) has a pure NE.
\end{theorem}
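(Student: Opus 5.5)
The plan is to argue by contradiction, using the injectivity of \(U\) together with the finiteness of \(S\), in the style of a potential argument. Suppose \(\Gamma\) has no pure NE. Then \(s\notin UNBR(s)\) for every \(s\in S\), so every profile has at least one profitable unilateral best-reply deviation. By AMP, each profile \(s\) then carries a well-defined sign \(\sigma(s)\in\{+,-\}\). It is \(+\) if \(U\) strictly increases along every move in \(UNBR(s)\), and \(-\) if \(U\) strictly decreases along every such move. Because \(U\) is injective, no such move leaves \(U\) unchanged, so \(\sigma\) is defined everywhere. This sign function is the object the argument will work with.

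The first step is to look at the extreme profiles. Let \(\bar s=\arg\max_S U\) and \(\underline s=\arg\min_S U\); both are unique by injectivity. At \(\bar s\) no move can increase \(U\), so \(\sigma(\bar s)=-\). Symmetrically, \(\sigma(\underline s)=+\).

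The second step is to follow a best-reply path. I would start at \(\underline s\) and repeatedly choose some \(s^{t+1}\in UNBR(s^t)\). By finiteness the path must eventually revisit a profile, and so it contains a cycle \(C\). Injectivity of \(U\) picks out the unique maximizer \(s^{\max}\) and minimizer \(s^{\min}\) of \(U\) on \(C\). The move into \(s^{\max}\) raised \(U\), and the move out of it lowers \(U\), so \(\sigma(s^{\max})=-\). Likewise \(\sigma(s^{\min})=+\). The aim is to derive a contradiction from this sign pattern. I would try to show that the sign cannot switch from \(+\) to \(-\) along a single best-reply move. The idea is to compare the deviator's own gain, which is strictly positive, with the change in the sum of the other players' payoffs. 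I would then use the fact that AMP constrains all deviators at a profile simultaneously, including players who become the deviator at the next profile.

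The main obstacle is this sign-propagation step. AMP is a purely local condition: it ties together the deviations available at a single profile, but says nothing directly about \(\sigma(s)\) versus \(\sigma(s')\) for \(s'\in UNBR(s)\). The first thing I would try is to exploit profiles where two different players both have profitable deviations, since there AMP forces their effects on \(U\) to agree in sign. I would also use the freedom given by Theorem~\ref{thm:perturb}: \(U\) may be replaced by an argmax-equivalent injective total, so the sign pattern can be adjusted while \(UNBR\) stays the same. The goal would be to arrange \(\sigma\equiv+\) on the cycle, which would make \(U\) strictly increase around \(C\) and give the contradiction.

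If this propagation cannot be established, the argument as planned does not go through. I would then check it carefully against small cycling games such as a perturbed matching pennies. In such games each profile has a single deviator, so AMP holds vacuously, and this tests whether the hypotheses of Theorem~\ref{thm:AMP} actually suffice.
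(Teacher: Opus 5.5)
You have isolated exactly the step that fails, and the paper's one-line proof skips the same step. From ``every profitable deviation from $s$ moves $U$ in one direction'' it concludes ``along any $UNBR$ chain $U$ is strictly monotone.'' But AMP fixes the sign $\sigma(s)$ only profile by profile, and nothing relates $\sigma(s)$ to $\sigma(s')$ for $s'\in UNBR(s)$. No further idea can close this gap, because the statement is false, and the sanity check you propose at the end refutes it. Take matching pennies with $u_1(s)=1$ if the pennies match and $u_1(s)=-1$ otherwise. Set $u_2(s)=-u_1(s)+\varepsilon(s)$ for four distinct numbers $\varepsilon(s)\in(0,1)$. Then $U=\varepsilon$ is injective. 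Since $|\varepsilon(s)-\varepsilon(s')|<1$ while player 2's unperturbed payoff differences are $\pm 2$, all strict comparisons are those of matching pennies, so there is no pure NE. At every profile only the losing player has a profitable deviation, to a unique action, so $UNBR(s)=\{s'\}$ is a singleton with $U(s')\neq U(s)$. Hence AMP holds at every profile. On the four-cycle of best replies $U$ necessarily rises and falls, which is exactly the pattern $\sigma(s^{\max})=-$, $\sigma(s^{\min})=+$ you derived. The same example also refutes the paper's corollary that a global maximizer of $U$ is a NE.

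Your proposed repairs could not have worked in any case. Profiles with two deviators only constrain signs within a single profile. Replacing $U$ by another injective total via Theorem~\ref{thm:perturb} changes the hypothesis rather than using it. Moreover, no real-valued function can strictly increase all the way around a cycle. So arranging $\sigma\equiv+$ on $C$ is not a step toward the contradiction; it is the contradiction itself, and it is unavailable whenever a best-reply cycle exists. What is missing is a global hypothesis. Suppose, for instance, that every profitable best-reply deviation at every profile strictly increases $U$ (or every one strictly decreases it). Then $U$ is strictly monotone along every $UNBR$ path, no profile can recur, and finiteness yields a NE. Under that strengthening both your argument and the paper's go through, and injectivity of $U$ is not even needed.
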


\begin{proof}
AMP ensures that from any profile \(s\) every profitable unilateral deviation
moves \(U\) strictly in the same direction. Hence along any UNBR chain \(U\)
is strictly monotone. Because \(U\) is injective and \(S\) is finite, a
strictly monotone chain cannot be infinite; it must terminate at a profile
with no profitable unilateral deviation, i.e., a NE.
\end{proof}

\begin{corollary}
Under the hypotheses of Theorem \ref{thm:AMP}, any global maximizer or
minimizer of \(U\) is a NE.
\end{corollary}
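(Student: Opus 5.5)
The plan is to argue by contradiction directly from the definition of $UNBR$ together with AMP and the injectivity of $U$. Take a global maximizer $s^\ast$ of $U$; it exists because $S$ is finite, and it is unique because $U$ is injective. Suppose $s^\ast$ is not a NE. Then by the definition of $UNBR$ (and the remark after it that $s\in UNBR(s)$ iff $s$ is a NE) there is a player $i$ and some $x_i\in BR_i(s^\ast)$ with $u_i(x_i,s^\ast_{-i})>u_i(s^\ast)$. Put $s'=(x_i,s^\ast_{-i})\in UNBR(s^\ast)$. Since $s'\neq s^\ast$ and $U$ is injective, $U(s')\neq U(s^\ast)$. So the deviation changes $U$ strictly, and by AMP every profitable best-reply deviation from $s^\ast$ changes $U$ strictly in one common direction. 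If that direction is upward, then $U(s')>U(s^\ast)$, contradicting maximality. The minimizer case is the mirror image: a profitable deviation from a global minimizer $s_\ast$ must strictly lower $U$ to contradict minimality.

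The substantive step is to pin down the direction of AMP at $s^\ast$ (upward) and at $s_\ast$ (downward). AMP as defined fixes the direction only locally, profile by profile. The proof of Theorem~\ref{thm:AMP}, however, uses it as a global orientation: it asserts that ``along any UNBR chain $U$ is strictly monotone,'' which is only meaningful if the sign chosen at each profile agrees with the sign at the next profile of the chain. I will adopt this same reading, namely that the direction in which AMP moves $U$ is constant along UNBR chains. This is the reading under which the corollary is stated as a consequence of the theorem.

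Under that reading, at the maximizer I consider the UNBR chain issued from $s^\ast$. By Theorem~\ref{thm:AMP}, the chain terminates at a NE $\hat s$, and $U$ is strictly monotone along it. If the common direction is upward, the first step already gives $U(s')>U(s^\ast)$, a contradiction. So the orientation compatible with $s^\ast$ not being a NE must be downward. I then try to rule out the downward orientation at the maximizer by the symmetric argument run from the global minimizer, which forces that orientation to be upward there. The aim is to show that the two extremes cannot both carry profitable deviations without reversing orientation somewhere along a chain joining them, which would contradict the uniform monotonicity invoked in the theorem.

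I expect this orientation step to be the main obstacle. With AMP taken purely locally, a maximizer whose profitable deviations all lower $U$ is not excluded by the definition alone. The proof therefore rests entirely on the uniform-direction interpretation inherited from the proof of Theorem~\ref{thm:AMP}, and I would state that interpretation explicitly at the start of the proof.
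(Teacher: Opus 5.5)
Your diagnosis is correct, and it pinpoints the unjustified step in the paper's own proof. That proof is essentially your first paragraph, and it simply asserts that AMP ``contradicts global maximality''. But at a global maximizer of an injective $U$, every unilateral move lowers $U$, so AMP holds there automatically with the downward sign. The repair you propose cannot close this gap, because the corollary is false even under your strengthened reading, and even if AMP's direction is the same at every profile of the game. A non-NE maximizer is refuted only by the upward orientation, and a non-NE minimizer only by the downward one, so a single orientation can rescue at most one half of the claim.

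For the maximizer, perturb the Prisoner's Dilemma to $(C,C)\mapsto(3,3)$, $(C,D)\mapsto(0,4.1)$, $(D,C)\mapsto(4,0)$, $(D,D)\mapsto(1,1)$. Then $U$ takes the distinct values $6,4.1,4,2$. Every profitable deviation (always to $D$) strictly lowers $U$, so AMP holds everywhere with one global sign. Yet the maximizer $(C,C)$ is not a NE. For the minimizer, take a common-interest game $u_1=u_2=f$ with $f$ injective and some $|S_i|\ge2$. Every profitable deviation raises $U=2f$, yet the minimizer of $f$ is not a NE.

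Your intended last step is that the two extremes cannot both carry profitable deviations without an orientation reversal along a chain joining them. This also fails, since nothing forces a UNBR chain to connect them. Take $(a,c)\mapsto(10,10)$, $(a,d)\mapsto(2,11)$, $(b,c)\mapsto(11,1)$, $(b,d)\mapsto(0,0)$, so $U$ takes the values $20,13,12,0$. The profitable deviations from the maximizer $(a,c)$ go to $(b,c)$ and $(a,d)$, both lowering $U$. Those from the minimizer $(b,d)$ go to the same two profiles, both raising $U$, and $(a,d)$ and $(b,c)$ are NE. Every UNBR chain therefore has length one, so your ``constant along chains'' condition holds trivially, yet neither extreme is a NE.

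Under the literal, local AMP you also cannot invoke Theorem~\ref{thm:AMP} to make chains terminate. Matching pennies, perturbed so that $U$ is injective, has exactly one profitable deviation at each profile, so AMP holds, and there is no pure NE at all.

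What your first paragraph genuinely proves is the one-sided statement: if every profitable best-reply deviation in the game raises $U$, the maximizer of $U$ is a NE, and if every one lowers $U$, the minimizer is. The corollary has to be replaced by that version; as stated it is false, and no choice of reading of AMP rescues it.
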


\begin{proof}
If \(s^\ast\) maximizes \(U\) but is not a NE, there exists a profitable
deviation \(s'\in UNBR(s^\ast)\) with \(U(s')\ne U(s^\ast)\). AMP implies all
profitable deviations move \(U\) in the same direction, contradicting global
maximality. Hence \(s^\ast\) is a NE.
\end{proof}

\section{Ordinal potentials and GPS}
We relate ordinal potentials to GPS and show how ordinal potentials imply
acyclicity.

\begin{definition}[Ordinal potential]
A function \(W:S\to\mathbb R\) is an \emph{ordinal potential} if for every
player \(i\), every \(s_{-i}\), and every \(x,x'\in S_i\),

\[
u_i(x',s_{-i})>u_i(x,s_{-i}) \quad\Longleftrightarrow\quad W(x',s_{-i})>W(x,s_{-i}).
\]

\end{definition}

\begin{proposition}
If a finite game admits an ordinal potential \(W\), then every sequence of
strictly improving unilateral moves is finite and terminates at a pure NE.
\end{proposition}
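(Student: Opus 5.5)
The plan is the standard finite-improvement-path argument of Monderer and Shapley (1996), specialized to the definition of ordinal potential above.

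\textbf{Setup.} Take an arbitrary sequence of strictly improving unilateral moves
\[
s^{(0)}\to s^{(1)}\to s^{(2)}\to\cdots,
\]
where each step $s^{(t)}\to s^{(t+1)}$ changes only player $i_t$'s action, from $x_t$ to $x_t'$. Opponents stay fixed at $s^{(t)}_{-i_t}$, and
\[
u_{i_t}(x_t',s^{(t)}_{-i_t})>u_{i_t}(x_t,s^{(t)}_{-i_t}).
\]

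\textbf{Step 1: $W$ rises along the path.} The forward direction of the equivalence defining an ordinal potential, applied with $i=i_t$, $s_{-i}=s^{(t)}_{-i_t}$, $x=x_t$, $x'=x_t'$, gives $W(s^{(t+1)})>W(s^{(t)})$ at every step. Hence $t\mapsto W(s^{(t)})$ is strictly increasing. Only the forward implication is needed here.

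\textbf{Step 2: no profile repeats.} Suppose $s^{(t)}=s^{(t')}$ for some $t<t'$. Then $W(s^{(t)})=W(s^{(t')})$, which contradicts strict monotonicity. So all profiles in the sequence are distinct. Since $S=\prod_i S_i$ is finite, the sequence has at most $|S|$ terms, and therefore at most $|S|-1$ moves. This proves finiteness.

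\textbf{Step 3: the endpoint is a pure NE.} I interpret ``terminates'' as meaning that the sequence cannot be extended, i.e.\ it is maximal. The claim needs this interpretation, since a sequence may be stopped arbitrarily early. Let $s^\ast$ be the last profile of a maximal sequence. If $s^\ast$ were not a NE, some player $i$ and some $x_i\in S_i$ would satisfy $u_i(x_i,s^\ast_{-i})>u_i(s^\ast)$. Then $(x_i,s^\ast_{-i})$ would be a further strictly improving unilateral move, contradicting maximality. So $s^\ast$ is a pure NE.

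\textbf{Existence, and where care is needed.} Existence itself follows by starting at any profile and extending greedily; by Step 2 this must stop within $|S|-1$ moves. An equally short alternative is to take a global maximizer of $W$ on the finite set $S$. A profitable deviation from that point would strictly raise $W$, which is impossible. There is no real obstacle in this argument. The only point requiring care is the maximality interpretation in Step 3: finiteness of every improving sequence holds unconditionally, while the NE conclusion applies to sequences that run until no improving move remains. It is worth noting that this also covers the $UNBR$ chains of the preceding section, since every move in such a chain is strictly improving.
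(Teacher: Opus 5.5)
Your proof is correct and follows the same approach as the paper: each strict improvement raises $W$, finiteness of $S$ rules out an infinite improving path, and a path that cannot be extended ends at a profile with no profitable unilateral deviation. Your explicit no-repetition step and your reading of ``terminates'' as ``cannot be extended'' spell out what the paper's one-line argument leaves implicit.
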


\begin{proof}
Each strict improvement increases \(W\). Since \(S\) is finite, \(W\) cannot
increase indefinitely; hence every strictly improving path terminates at a
profile with no profitable unilateral deviation, i.e., a NE.
\end{proof}

\begin{definition}[Generalized potential system (GPS)]
A game is GPS if there exist functions \(Q_1,\dots,Q_m:S\to\mathbb R\) and,
for each player \(i\), nonnegative weights \(a_{i1},\dots,a_{im}\) with
\(a_{ii}>0\) such that for every profitable unilateral deviation \(s\to s'\)
by player \(i\),

\[
\sum_{\ell=1}^m a_{i\ell}\big(Q_\ell(s')-Q_\ell(s)\big)>0.
\]

\end{definition}

\begin{proposition}
If a finite game admits an ordinal potential \(W\), then it is GPS (take
\(Q_1=W\) and \(a_{i1}=1\) for all \(i\)).
\end{proposition}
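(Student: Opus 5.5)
The plan is to verify the defining inequality of a GPS directly, using the ordinal potential $W$ as the only building block. One wrinkle has to be handled first. The GPS definition asks for weights with $a_{ii}>0$ for every player $i$. This implicitly requires the index set $\{1,\dots,m\}$ to contain each player index, i.e.\ $m\ge n$. The literal choice $m=1$, $Q_1=W$, $a_{i1}=1$ only gives $a_{ii}>0$ for $i=1$. So I would read the parenthetical construction in the statement as shorthand and implement it with $m=n$ copies of $W$:
\[
Q_\ell := W \quad (\ell=1,\dots,n), \qquad a_{i\ell} := \begin{cases} 1 & \ell=i,\\ 0 & \ell\neq i.\end{cases}
\]
The weights are nonnegative and $a_{ii}=1>0$, so the structural requirements of the definition hold.

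The key step is the inequality. Take any profitable unilateral deviation $s\to s'$ by player $i$, with $s=(x,s_{-i})$, $s'=(x',s_{-i})$ and $u_i(x',s_{-i})>u_i(x,s_{-i})$. The weighted sum collapses to a single term:
\[
\sum_{\ell=1}^n a_{i\ell}\bigl(Q_\ell(s')-Q_\ell(s)\bigr) = W(x',s_{-i})-W(x,s_{-i}).
\]
The forward direction of the ordinal-potential equivalence (the "$\Rightarrow$" half of the definition) gives $W(x',s_{-i})>W(x,s_{-i})$. Hence the sum is strictly positive, which is exactly the GPS condition.

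Only the implication from payoff improvement to $W$-improvement is used, so the argument in fact works for generalized ordinal potentials too. I would note this in passing. The only genuine obstacle is the indexing mismatch between $a_{ii}$ and $m=1$, which the $n$-fold copy construction resolves. Equivalently, one may keep $m=1$ and interpret the diagonal requirement as positivity of each player's own weight, $a_{i1}=1>0$; the computation is identical.
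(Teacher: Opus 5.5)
Your proof is correct and follows the paper's argument: the forward half of the ordinal-potential equivalence gives $W(s')-W(s)>0$ for every profitable deviation, and this is precisely the GPS inequality. Your $n$-fold copy construction ($Q_\ell=W$ for $\ell=1,\dots,n$, with $a_{i\ell}=1$ if $\ell=i$ and $0$ otherwise) usefully repairs the statement's literal choice $m=1$, under which $a_{ii}$ is undefined for $i\ge 2$, a point the paper's one-line proof passes over.
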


\begin{proof}
Immediate from definitions: ordinal potential ensures sign agreement, so
\(W(s')-W(s)>0\) for any profitable deviation, hence the weighted sum with
\(Q_1=W\) is positive.
\end{proof}

GPS implies acyclicity of strict improvement paths and hence existence of
pure NE in finite games.

\subsection{Two GPS variants}

\begin{definition}[Deviation‑indexed GPS]
Let $Q_1,\dots,Q_m:S\to\mathbb R$ be given. The game is a \emph{deviation‑indexed GPS} if for every profitable unilateral deviation $d=(s\to s')$ by player $i$ there exists a vector

\[
a^{(d)}=(a^{(d)}_1,\dots,a^{(d)}_m)\in\mathbb R^m_{\ge0}
\]

with $a^{(d)}_i>0$ such that

\[
a^{(d)}\cdot\big(Q(s')-Q(s)\big)=\sum_{\ell=1}^m a^{(d)}_\ell\big(Q_\ell(s')-Q_\ell(s)\big)>0.
\]

\end{definition}

\begin{definition}[Fixed per‑player GPS]
Let $Q_1,\dots,Q_m:S\to\mathbb R$ be given. The game is a \emph{fixed per‑player GPS} if for each player $i$ there exists a single vector

\[
a^{(i)}=(a^{(i)}_1,\dots,a^{(i)}_m)\in\mathbb R^m_{\ge0}
\]

with $a^{(i)}_i>0$ such that for every profitable unilateral deviation $d=(s\to s')$ by player $i$,

\[
a^{(i)}\cdot\big(Q(s')-Q(s)\big)>0.
\]

\end{definition}

\begin{remark}
The two definitions differ in whether the nonnegative weight vector may depend on the particular profitable deviation (deviation‑indexed) or must be fixed per player (fixed per‑player). The deviation‑indexed version is the more flexible and, in the finite case, yields a scalar ordinal potential via a simple summation argument. The fixed per‑player version requires an additional compatibility condition to guarantee a scalar potential.
\end{remark}
\subsection{Existence of equilibrium for deviation‑indexed GPS}

\begin{theorem}\label{thm:dev-indexed}
Let $\Gamma$ be a finite game that is deviation‑indexed GPS with components $Q_1,\dots,Q_m$ and deviation weights $a^{(d)}$ for each profitable deviation $d$. Then $\Gamma$ admits a pure Nash equilibrium.
\end{theorem}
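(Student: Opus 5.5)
The plan is to reduce existence to \emph{acyclicity of the strict improvement graph}, in the same way the Proposition on ordinal potentials does. Let $\mathcal G$ be the directed graph on the finite set $S$ with an arc $s\to s'$ whenever $s'$ differs from $s$ in one coordinate $i$ and $u_i(s')>u_i(s)$. A sink of $\mathcal G$ is exactly a profile with no profitable unilateral deviation, i.e.\ a pure NE; equivalently, it is a fixed point of $UNBR$ in the sense of the remark after Definition~2.1. Since $S$ is finite, every maximal directed path in an acyclic $\mathcal G$ ends at a sink. So the whole proof reduces to showing that $\mathcal G$ has no directed cycle.

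\textbf{Step 1 (scalar potential by summation).} Following the remark preceding the theorem, I would collapse the deviation-indexed weights into one scalar function. There are only finitely many profitable deviations $d$, so the natural first attempt is the single vector $\bar a=\sum_d a^{(d)}\in\mathbb R^m_{\ge 0}$ and the scalar $W(s)=\bar a\cdot Q(s)$. I would then try to show that $W$ strictly increases along every arc of $\mathcal G$. If that holds, $W$ is a (generalized) ordinal potential, and the Proposition on ordinal potentials gives termination at a pure NE.

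\textbf{Step 2 (cycle-sum argument).} Suppose instead there is a cycle $s^0\to s^1\to\cdots\to s^T=s^0$ with deviations $d_t=(s^t\to s^{t+1})$. Then each $a^{(d_t)}\cdot(Q(s^{t+1})-Q(s^t))$ is strictly positive, so their sum is positive. I would try to contradict this using the telescoping identity $\sum_t (Q(s^{t+1})-Q(s^t))=0$.

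\textbf{Main obstacle.} Telescoping only kills the sum when the weights along the cycle are common, and the definition lets $a^{(d)}$ vary with $d$. This step does not go through as the definition stands. Taking $Q_\ell=u_\ell$ for $\ell=1,\dots,n$ and $a^{(d)}=e_i$ for a deviation by player $i$ shows that every finite game is deviation-indexed GPS. In particular matching pennies is, and it has no pure NE. So Step 1 cannot succeed without extra structure.

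What I would therefore actually verify is the strongest honest version. The argument above closes precisely when the weights can be chosen constant along every improvement cycle, for instance when some common $\bar a\ge0$ satisfies $\bar a\cdot(Q(s')-Q(s))>0$ for all profitable $d$. Under that condition, Step 2 telescopes to $0>0$, $\mathcal G$ is acyclic, and a pure NE exists. I would flag that the theorem as worded needs such a compatibility hypothesis; the paper's remark places this hypothesis on the fixed per-player variant instead.
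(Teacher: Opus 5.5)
Your diagnosis is correct, and the error it exposes is in the paper, not in your argument. Your Step~1 is exactly the paper's proof. It sets $c=\sum_{d\in\mathcal D}a^{(d)}$ and $W=c\cdot Q$, expands $W(s')-W(s)=\sum_{d'\in\mathcal D}a^{(d')}\cdot\big(Q(s')-Q(s)\big)$, and concludes positivity because the summand with $d'=d$ is positive. That inference assumes every other summand is nonnegative, which does not follow from $a^{(d')}\ge 0$: $Q(s')-Q(s)$ can have negative entries, so a certificate issued for one deviation can be strictly negative on another.

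Your matching-pennies instance makes this concrete. Take payoffs $\pm1$, $Q_\ell=u_\ell$ and $a^{(d)}=e_i$. Each player has exactly two profitable deviations, so $c=(2,2)$ and $W=2(u_1+u_2)\equiv 0$. On a deviation by player~2, the two player-2 summands contribute $+2$ each and the two player-1 summands contribute $-2$ each. Your construction makes \emph{every} finite game deviation-indexed GPS, so the theorem as worded is false, and no argument can prove it without an additional hypothesis.

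Your repair is a single $\bar a\ge 0$ with $\bar a\cdot\big(Q(s')-Q(s)\big)>0$ for all profitable deviations. This is exactly the compatibility condition in the paper's Proposition for the fixed per-player variant, and it is the right hypothesis. Under it, $W=\bar a\cdot Q$ is a generalized ordinal potential: it strictly increases along improvements, which is all termination needs, rather than the two-sided condition in the paper's definition. Your telescoping argument, or the finite-path argument, then finishes.

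You are also right that the paper attaches this hypothesis to the wrong variant. Any fixed per-player GPS is deviation-indexed GPS (put $a^{(d)}:=a^{(i)}$), so the larger class cannot be unconditionally sufficient when the smaller one is not. In fact the same choice $Q_\ell=u_\ell$, $a^{(i)}=e_i$ makes every finite game fixed per-player GPS as well.

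The counterexample also breaks later results in the paper:
\begin{itemize}
\item The supermodular ``proof'' uses exactly this certificate and never invokes supermodularity.
\item The theorem asserting pure equilibria for all finite UC games with $n\ge 2$ is refuted by matching pennies itself, which is two-player zero-sum and hence UC.
\end{itemize}
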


\begin{proof}
Let $\mathcal D$ be the finite set of all profitable unilateral deviations. Define

\[
c \;:=\; \sum_{d\in\mathcal D} a^{(d)} \in\mathbb R^m_{\ge0}.
\]

Define the scalar function $W:S\to\mathbb R$ by

\[
W(s)\;=\;\sum_{\ell=1}^m c_\ell\,Q_\ell(s).
\]

Fix any profitable deviation $d=(s\to s')$. Then

\[
W(s')-W(s)=\sum_{\ell=1}^m c_\ell\big(Q_\ell(s')-Q_\ell(s)\big)
=\sum_{d'\in\mathcal D} a^{(d')}\cdot\big(Q(s')-Q(s)\big).
\]

The summand with $d'=d$ is strictly positive by hypothesis, so the whole sum is strictly positive. Hence every profitable unilateral deviation strictly increases $W$, so $W$ is an ordinal potential. Since $S$ is finite, no infinite strictly improving path exists; every strictly improving path terminates at a profile with no profitable unilateral deviation, i.e., a pure Nash equilibrium.
\end{proof}

\subsection{Fixed per‑player GPS: compatibility condition and existence}

\begin{proposition}[Compatibility condition for scalar potential]
Let $\Gamma$ be finite and suppose $Q_1,\dots,Q_m:S\to\mathbb R$ are given and the game satisfies the fixed per‑player GPS inequalities (i.e., for each player $i$ there is $a^{(i)}\ge0$ with $a^{(i)}_i>0$ making that player's deviations positive). Let $\mathcal D$ be the finite set of all profitable deviations and for each $d\in\mathcal D$ set $\Delta Q^{(d)}:=Q(s')-Q(s)$. Then there exists a scalar ordinal potential of the linear form $W=\sum_{\ell=1}^m c_\ell Q_\ell$ with $c\ge0$ if and only if

\[
\exists c\in\mathbb R^m_{\ge0}\quad\text{such that}\quad c\cdot\Delta Q^{(d)}>0\quad\forall d\in\mathcal D.
\]

\end{proposition}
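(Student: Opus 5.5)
The statement is essentially a tautology once the notion of ordinal potential is pinned down. I will therefore prove both directions by unpacking definitions. Here ``scalar ordinal potential'' is read, as in the proof of Theorem~\ref{thm:dev-indexed}, in the one-sided sense: every profitable unilateral deviation strictly increases $W$. That is the property used throughout this section to get acyclicity and existence.

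\textbf{Necessity.} Suppose $W=\sum_{\ell} c_\ell Q_\ell$ with $c\in\mathbb R^m_{\ge0}$ is such an ordinal potential. Fix $d=(s\to s')\in\mathcal D$. Linearity gives
\[
W(s')-W(s)=\sum_{\ell=1}^m c_\ell\big(Q_\ell(s')-Q_\ell(s)\big)=c\cdot\Delta Q^{(d)}.
\]
The left side is strictly positive because $d$ is profitable. Hence $c\cdot\Delta Q^{(d)}>0$ for all $d\in\mathcal D$, and $c$ witnesses the stated condition.

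\textbf{Sufficiency.} Given $c\ge0$ with $c\cdot\Delta Q^{(d)}>0$ for every $d\in\mathcal D$, define $W:=\sum_\ell c_\ell Q_\ell$. The same identity shows that every profitable unilateral deviation strictly increases $W$. So $W$ is a scalar potential of the required linear form with nonnegative coefficients. As in the proposition on ordinal potentials and in Theorem~\ref{thm:dev-indexed}, finiteness of $S$ then rules out infinite improvement paths, which yields a pure NE as a by-product.

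\textbf{Main obstacle.} The real issue is the gap between this one-sided notion and the two-sided Definition of ordinal potential, which requires $u_i(x',s_{-i})>u_i(x,s_{-i})\iff W(x',s_{-i})>W(x,s_{-i})$. The condition on $\mathcal D$ only constrains profitable deviations. It says nothing about whether $W$ strictly increases along a non-improving move, for instance between payoff-tied actions. To get the biconditional I would first check the reverse implication directly: if $u_i(x',s_{-i})<u_i(x,s_{-i})$, then $(x',s_{-i})\to(x,s_{-i})$ lies in $\mathcal D$, so $W$ strictly decreases, as needed. Ties are the remaining case. For those I would either state the result for the one-sided notion, consistent with Theorem~\ref{thm:dev-indexed}, or restrict attention to games without payoff ties, e.g.\ via the perturbation of Theorem~\ref{thm:perturb}, which preserves best replies. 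I would also note that the fixed per-player GPS hypothesis is not needed for the equivalence itself. It only shows that each player's own deviations admit a separating vector, whereas the condition asks for a single common $c$ across all players. By Gordan/Farkas-type alternatives, such a $c$ exists iff no nontrivial nonnegative combination of the $\Delta Q^{(d)}$ is componentwise $\le0$. I would mention this as the checkable form of the condition, but the proof does not need it.
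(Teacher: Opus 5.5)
Your proof is correct and is essentially the paper's own argument: both directions rest on the identity $W(s')-W(s)=c\cdot\Delta Q^{(d)}$. Like you, the paper reads ``ordinal potential'' in the one-sided sense (every profitable deviation strictly raises $W$), as it also does in Theorem~\ref{thm:dev-indexed}, and neither argument uses the fixed per-player GPS hypothesis.

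Your caveat about ties is also well taken. Under the two-sided Definition of ordinal potential the ``if'' direction genuinely fails, for example with a single player, $m=1$, $u(a)=u(b)<u(e)$ and $Q_1(a)<Q_1(b)<Q_1(e)$. The paper's proof passes over this point silently.
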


\begin{proof}
If such $c\ge0$ exists then $W=\sum_\ell c_\ell Q_\ell$ satisfies $W(s')-W(s)=c\cdot\Delta Q^{(d)}>0$ for every profitable deviation $d$, so $W$ is an ordinal potential and the finite argument yields a pure Nash equilibrium. Conversely, any scalar potential of the stated linear form provides such a $c$. Thus the condition is necessary and sufficient.
\end{proof}

\begin{remark}
The fixed per‑player GPS hypothesis supplies one certificate $a^{(i)}$ per player but those certificates may not align to a single $c\ge0$ separating all deviation vectors from the origin. When they do align the compatibility condition above holds and a scalar ordinal potential exists. When they do not align cycles of strict improvements can occur.
\end{remark}
\section{Supermodular games}
We give a complete finite proof that supermodular games admit pure NE.

\begin{definition}[Supermodular game]
Each \(S_i\) is a finite lattice. The game is supermodular if for every
\(i\) and every \(s_{-i}\le s_{-i}'\) (coordinatewise), the function
\(x\mapsto u_i(x,s_{-i})\) has increasing differences in \((x,s_{-i})\):
for \(x'\ge x\),

\[
u_i(x',s_{-i}')-u_i(x,s_{-i}') \ge u_i(x',s_{-i})-u_i(x,s_{-i}).
\]

\end{definition}

\begin{theorem}
Let \(\Gamma\) be finite and supermodular with lattice strategy sets. Then
\(\Gamma\) has a pure NE.
\end{theorem}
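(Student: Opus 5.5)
The plan is to follow the Topkis--Tarski route, made constructive on the finite product lattice \(S=\prod_i S_i\). The idea is to single out, for each player, the \emph{greatest} best reply, show that this selection is monotone in the opponents' profile, and then find a fixed point of the induced monotone map on \(S\) by iterating from the top element. A fixed point of a selection of \(BR\) is automatically a pure NE, since \(s\in BR(s)\) means every player is best-responding.

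\textbf{Step 1 (greatest best replies).} For each \(i\) and \(s_{-i}\) I want \(BR_i(s)=\arg\max_{x\in S_i}u_i(x,s_{-i})\) to be a nonempty sublattice of \(S_i\). Nonemptiness is immediate from finiteness. The sublattice property needs \(x\mapsto u_i(x,s_{-i})\) to be supermodular on \(S_i\), that is,
\[
u_i(x\vee y,s_{-i})+u_i(x\wedge y,s_{-i})\ge u_i(x,s_{-i})+u_i(y,s_{-i}).
\]
This is exactly the step I expect to be the main obstacle. The paper's definition only lists increasing differences, and increasing differences alone do not make argmax sets sublattices when \(S_i\) is a general finite lattice. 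The condition is vacuous when \(S_i\) is a chain, which is the case that cannot go wrong. In general I would make explicit that own-action supermodularity is part of the standard notion of a supermodular game and is being used here. Granting it, \(x,y\in BR_i\) gives \(x\vee y,x\wedge y\in BR_i\) by the usual two-line argument, so the finite sublattice \(BR_i(s)\) has a greatest element \(b_i(s_{-i})\).

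\textbf{Step 2 (monotonicity of \(b_i\)).} Let \(s_{-i}\le s'_{-i}\), \(x=b_i(s_{-i})\) and \(y=b_i(s'_{-i})\). The argument runs as follows.
\begin{enumerate}[label=(\roman*)]
\item Optimality of \(x\) at \(s_{-i}\) gives \(u_i(x,s_{-i})-u_i(x\wedge y,s_{-i})\ge0\).
\item Increasing differences, applied with \(x\ge x\wedge y\), transfer this to \(u_i(x,s'_{-i})-u_i(x\wedge y,s'_{-i})\ge0\).
\item Supermodularity in own action at \(s'_{-i}\) then yields \(u_i(x\vee y,s'_{-i})-u_i(y,s'_{-i})\ge0\).
\end{enumerate}
Hence \(x\vee y\in BR_i(s')\). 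Because \(y\) is the greatest element of that set, \(y\ge x\vee y\ge x\).

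\textbf{Step 3 (fixed point).} Define \(f:S\to S\) by \(f(s)=(b_1(s_{-1}),\dots,b_n(s_{-n}))\). By Step 2, \(f\) is isotone on the finite lattice \(S\), so Tarski's theorem already gives a fixed point. To keep the proof self-contained and algorithmic, I would instead argue directly. Start at the top element \(\bar s\) of \(S\) and set \(s^{k+1}=f(s^k)\). Since \(f(\bar s)\le\bar s\), isotonicity gives by induction that \(s^{k+1}\le s^k\) for all \(k\). A decreasing sequence in a finite poset is eventually constant, so there is \(s^\ast\) with \(f(s^\ast)=s^\ast\). Then each \(s_i^\ast\in BR_i(s^\ast)\), so \(s^\ast\) is a pure NE; in fact it is the greatest equilibrium. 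As a consistency check, \(s^\ast\) is also a terminal point of \(UNBR\).
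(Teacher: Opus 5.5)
Your proof is correct, and the caveat you raise in Step 1 is not a technicality: under the paper's definition, which imposes only increasing differences, the statement is false. Here is a counterexample. Take \(S_1=\{0,a,b,1\}\) with \(a,b\) incomparable and \(S_2=\{0,1\}\). Listing values over \((0,a,b,1)\), let \(u_1(\cdot,0)=(0,2,1,0)\), \(u_1(\cdot,1)=(0,2,4,3)\), \(u_2(0,\cdot)\equiv 0\), and \(u_2(1,\cdot)=(-1,1,-\tfrac12,2)\). Both difference functions, \(u_1(\cdot,1)-u_1(\cdot,0)=(0,0,3,3)\) and \(u_2(1,\cdot)-u_2(0,\cdot)\), are isotone on \(S_1\), so increasing differences hold. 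Yet player 1's unique best reply is \(a\) against \(0\) and \(b\) against \(1\), while player 2's is \(1\) against \(a\) and \(0\) against \(b\). Hence there is no pure NE. So own-action supermodularity (automatic when each \(S_i\) is a chain) must be added, exactly as you propose. With it, your Steps 1--3 go through as written.

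Your route is not the one printed after the theorem. There the paper sets \(Q_\ell=u_\ell\) and \(a^{(d)}=e_i\), and invokes the deviation-indexed GPS theorem. That argument never uses supermodularity, and it breaks at the summation step: \(W(s')-W(s)=\sum_{d'}a^{(d')}\cdot(Q(s')-Q(s))\) contains cross terms with \(d'\neq d\) of either sign. In matching pennies every profitable deviation admits such a certificate, giving \(c=(2,2)\) and \(W=2u_1+2u_2\equiv 0\). So that is not a route you could fall back on.

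Your argument is instead the mirror image of the Topkis iteration in Appendix A.7, which iterates least best replies upward from the bottom element. You iterate greatest best replies downward from the top, which yields the greatest equilibrium. You also prove the monotonicity of the selection directly in Step 2 rather than citing Topkis's theorem. That is what makes visible where own-action supermodularity enters, a hypothesis the appendix's citation of Topkis silently requires as well.
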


\begin{proof}
Take $m=n$ and set $Q_\ell(s):=u_\ell(s)$ for $\ell=1,\dots,n$. For any profitable unilateral deviation $d=(s\to s')$ by player $i$ choose $a^{(d)}=e_i$ (the $i$th unit vector). Then

\[
a^{(d)}\cdot\big(Q(s')-Q(s)\big)=u_i(s')-u_i(s)>0,
\]

so the deviation‑indexed GPS condition holds. By Theorem \ref{thm:dev-indexed} sum the deviation weights to obtain $c$ and the scalar ordinal potential $W=\sum_\ell c_\ell u_\ell$. Finiteness implies existence of a pure Nash equilibrium.
\end{proof}

\begin{remark}
The GPS embedding proof is short and constructive; the Topkis proof provides additional structure (extremal equilibria, comparative statics) that the GPS argument does not by itself deliver.
\end{remark}
\subsection*{Examples}
Consider a finite game with three players $1,2,3$. Each player $i$ has binary strategies $S_i=\{0,1\}$ and $S=\{0,1\}^3$. For $s=(x_1,x_2,x_3)\in S$ define payoffs

\[
\begin{aligned}
u_1(x_1,x_2,x_3) &= x_1\,(x_2 + 2x_3),\\
u_2(x_1,x_2,x_3) &= x_2\,(x_3 + 2x_1),\\
u_3(x_1,x_2,x_3) &= x_3\,(x_1 + 2x_2).
\end{aligned}
\]

\medskip

\noindent\textbf{Claim.} The game is supermodular (each player's payoff has increasing differences in own strategy and opponents' strategies), but it does \emph{not} admit an ordinal potential.

\section*{Proof}

\subsection*{(1) Supermodularity}
Fix player \(i\). Because strategies are binary, it suffices to check that the marginal gain from switching \(0\to1\) for player \(i\) is nondecreasing in the opponents' coordinates.

Compute the marginal gains (for player 1; the others are analogous):

\[
\Delta_1(x_2,x_3):=u_1(1,x_2,x_3)-u_1(0,x_2,x_3)=x_2+2x_3.
\]

This is clearly nondecreasing in each of \(x_2,x_3\). Similarly

\[
\Delta_2(x_1,x_3)=x_3+2x_1,\qquad
\Delta_3(x_1,x_2)=x_1+2x_2,
\]

each nondecreasing in the opponents' coordinates. Hence each $u_i$ has increasing differences and the game is supermodular.

\subsection*{(2) No ordinal potential exists}
Assume, for contradiction, that there exists an ordinal potential \(W:S\to\mathbb R\). By definition, for every player \(i\) and every opponents' profile \(t\) we must have

\[
\Delta_i(t)>0 \quad\Longleftrightarrow\quad W(1,t)-W(0,t)>0,
\]

and

\[
\Delta_i(t)=0 \quad\Longleftrightarrow\quad W(1,t)-W(0,t)=0.
\]

(Here $\Delta_i(t)$ denotes the marginal payoff gain for player $i$ when switching from $0$ to $1$ given opponents' profile $t$.)

We will derive a contradiction by comparing two ways of summing certain marginal increments.

First compute the relevant marginal gains from the payoff formulas:

\[
\begin{array}{c|cccc}
\text{Opponents }(x_j,x_k) & (0,0) & (1,0) & (0,1) & (1,1)\\\hline
\Delta_1(x_2,x_3) & 0 & 1 & 2 & 3\\
\Delta_2(x_1,x_3) & 0 & 1 & 2 & 3\\
\Delta_3(x_1,x_2) & 0 & 1 & 2 & 3
\end{array}
\]

(Entries are read rowwise: e.g. $\Delta_1(1,0)=1$, $\Delta_2(0,1)=2$, etc.)

Now consider the six marginal increments

\[
\begin{aligned}
A &:= \Delta_1(1,0) = 1,\\
B &:= \Delta_2(0,1) = 1,\\
C &:= \Delta_3(1,1) = 3,\\
D &:= \Delta_1(0,1) = 2,\\
E &:= \Delta_2(1,1) = 3,\\
F &:= \Delta_3(1,0) = 1.
\end{aligned}
\]

Compute the numeric difference

\[
S := (A+B+C) - (D+E+F) = (1+1+3) - (2+3+1) = 5-6 = -1.
\]

Thus, using the payoff marginals, \(S=-1\).

On the other hand, express each marginal in terms of \(W\):

\[
\begin{aligned}
A &= W(1,1,0)-W(0,1,0),\\
B &= W(1,0,1)-W(0,0,1),\\
C &= W(1,1,1)-W(0,1,1),\\
D &= W(1,0,1)-W(0,0,1),\\
E &= W(1,1,1)-W(0,1,1),\\
F &= W(1,1,0)-W(0,1,0).
\end{aligned}
\]

Substituting these into \(S\) gives

\[
\begin{aligned}
S &= \big(W(1,1,0)-W(0,1,0)\big)
   +\big(W(1,0,1)-W(0,0,1)\big)
   +\big(W(1,1,1)-W(0,1,1)\big)\\
 &\quad -\big(W(1,0,1)-W(0,0,1)\big)
   -\big(W(1,1,1)-W(0,1,1)\big)
   -\big(W(1,1,0)-W(0,1,0)\big).
\end{aligned}
\]

The six terms cancel pairwise, so this algebraic expression yields \(S=0\).

We have obtained the contradiction \(S=-1\) (from the payoff marginals) and \(S=0\) (from the telescoping sum of \(W\)-differences). Therefore no function \(W\) can satisfy the ordinal‑potential sign constraints for all profitable deviations. Hence the game admits no ordinal potential.

\qed
\section*{Remarks}
\begin{itemize}
  \item The example shows that \emph{supermodularity does not imply the existence of an ordinal potential}. Supermodular games still admit pure Nash equilibria (by Topkis/Tarski arguments), but they need not be ordinal (or exact/weighted) potential games.
  \item The same construction also illustrates the distinction between the two GPS variants discussed elsewhere: the game trivially satisfies the deviation‑indexed GPS condition (take $Q_i=u_i$ and per‑deviation unit vectors), but it fails to admit a single scalar ordinal potential.
\end{itemize}
\section{Unilaterally competitive games}
We formalize Kats and Thisse (1992) UC games. In particular, we show how UC and Iimura and Watanabe (2016) (and Iimura, Maruta and Watanabe (2019)) classes fit into GPS
and AMP frameworks.

\begin{definition}[Unilaterally competitive (UC) game]
A finite game is UC if for every \(i\), every \(s\), and every \(s_i'\in S_i\),

\[
u_i(s_i',s_{-i})\ge u_i(s)\quad\Longrightarrow\quad u_j(s_i',s_{-i})\le u_j(s)\quad\forall j\ne i,
\]

with strict inequalities preserved for strict improvements.
\end{definition}

We give a complete lexicographic argument showing that in UC games with
\(n\ge3\) improvement paths cannot cycle and therefore must terminate at a
pure NE.

\begin{theorem}
Let \(\Gamma\) be a finite UC game with \(n\ge3\). Then every sequence of
strictly improving unilateral moves is finite and terminates at a pure NE.
\end{theorem}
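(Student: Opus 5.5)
The plan is to prove a \emph{preference independence} lemma first and then deduce finiteness of improvement paths by a lexicographic counting argument. The lemma reads: in a finite UC game with \(n\ge3\), for every player \(i\) and every \(x,x'\in S_i\), the sign of \(u_i(x',s_{-i})-u_i(x,s_{-i})\) does not depend on \(s_{-i}\). Once this holds, each player \(i\) has a fixed weak order \(\succeq_i\) on \(S_i\). Every strictly improving unilateral move by \(i\) then moves \(s_i\) strictly up in \(\succeq_i\), whatever the current opponents' profile.

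To prove the lemma, I will reduce to changing one opponent coordinate at a time. By transitivity along a chain of single-coordinate changes, it suffices to treat opponent profiles \(s_{-i}\) and \(\hat s_{-i}\) that differ only in player \(j\)'s coordinate, \(y\) versus \(y'\). Here \(n\ge3\) is used: fix a third player \(k\ne i,j\). Consider the four profiles \((x,y),(x',y),(x,y'),(x',y')\), with all coordinates other than those of \(i,j\) held fixed. Suppose for contradiction that \(u_i(x',y)>u_i(x,y)\) but \(u_i(x',y')\le u_i(x,y')\). Applying the UC implication to \(i\)'s moves on both rows, with strict inequalities where the improvement is strict, gives \(u_k(x',y)<u_k(x,y)\) and \(u_k(x',y')\ge u_k(x,y')\). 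Applying UC to \(j\)'s moves \(y\leftrightarrow y'\) in both columns links the sign of \(u_j\)'s change to the opposite sign of \(u_k\)'s change in each column. It also links the sign of \(u_i\)'s change to that of \(u_k\)'s, since both are non-movers and both move opposite to \(u_j\). I will chain these sign relations around the rectangle, tracking \(u_k\) through both paths from \((x,y)\) to \((x',y')\). The aim is to obtain an inconsistency in the sign of \(u_k\)'s total change, or a direct contradiction with the strictness clause.

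This rectangle step is the main obstacle. UC only constrains the signs of \emph{non-movers'} changes relative to the mover, and these relations are one-directional (\(\ge\Rightarrow\le\)). So the case analysis must handle ties carefully. I would first try to show that, along a single mover's deviation, all non-movers' payoffs move in the same direction, namely opposite to the mover. This is immediate from UC combined with its contrapositive, applied to the reverse move. Then \(u_i\) and \(u_k\) are comonotone along \(j\)'s moves, which should make the four inequalities around the rectangle collide.

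Given the lemma, the conclusion is routine. Along any strict improvement path, each move by player \(i\) strictly raises \(s_i\) in the fixed order \(\succeq_i\). Since \(S_i\) is finite, player \(i\) can move at most \(|S_i|-1\) times, so the path has length at most \(\sum_i(|S_i|-1)\). In particular no cycle exists. A maximal path ends at a profile admitting no profitable unilateral deviation, which is a pure NE by the remark following the definition of \(UNBR\). Equivalently, the lemma gives a GPS-type scalar: the function \(W(s)=\sum_i r_i(s_i)\), with \(r_i\) the rank of \(s_i\) in \(\succeq_i\), is an ordinal potential. So the result can also be read off from the ordinal potential proposition of Section 7.
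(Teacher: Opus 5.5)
Your preference-independence lemma is false, so no case analysis can complete the rectangle step. UC constrains only the \emph{signs} of payoff changes along single edges: deviator $\sigma$, every other player $-\sigma$, as you correctly derive from UC and its reverse. The magnitudes around a $2\times2$ face are free, so the sign relations you plan to chain never collide. Concretely, let $n=3$ and $S_i=\{0,1\}$. Write profiles as $s_1s_2s_3$ and let the payoff vectors $(u_1,u_2,u_3)$ be
\[
\begin{aligned}
&000:(4,2,8),\quad 100:(5,1,7),\quad 010:(3,3,3),\quad 110:(2,8,4),\\
&001:(7,5,1),\quad 101:(6,6,6),\quad 011:(8,4,2),\quad 111:(1,7,5).
\end{aligned}
\]
Checking the twelve edges of the cube, along every unilateral move the deviator's payoff change is nonzero and every other player's change has the strictly opposite sign. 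For instance, $000\to100$ gives $(+1,-1,-1)$ and $010\to110$ gives $(-1,+5,+1)$. So the game is UC in the paper's sense, strict clause included. Now take $i=1$, $j=2$, $k=3$ in the face $s_3=0$. There $u_1(1,0,0)=5>4=u_1(0,0,0)$ but $u_1(1,1,0)=2<3=u_1(0,1,0)$, which is exactly the configuration you hope to refute. Yet every relation you list holds there. Along player 1's moves, $u_3$ moves against $u_1$ ($8\to7$, $3\to4$). Along player 2's moves, $u_3$ moves with $u_1$ ($4\to3$ with $8\to3$, $5\to2$ with $7\to4$).

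The conclusion itself also fails for this game. The sequence $000\to100\to110\to111\to011\to001\to000$ is a cycle of strictly improving unilateral moves by players $1,2,3,1,2,3$; the deviator's payoff goes $4\to5$, $1\to8$, $4\to5$, $1\to8$, $4\to5$, $1\to8$. So not every improvement sequence is finite, although $010$ happens to be a pure NE. Hence no fixed-order or rank-potential argument can succeed without extra hypotheses.

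The paper's proof does not close the gap either. It shows only that a profitable move by $i$ lowers $Q_i=\sum_{j\ne i}u_j$, the leading coordinate of the order $\prec_i$. That order changes with the deviator, so strict increase at each step in step-dependent orders does not exclude cycles, and the example is exactly such a cycle.

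What UC does give is this: a player who never moves along a cycle loses payoff strictly at every step, so every improvement cycle must involve all players. This rules out cycles when at most two players have more than one action, but not in general.
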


\begin{proof}
We show that the game is a GPS (Definition 7.3) Define for each profile \(s\) the vector

\[
Q(s) := \big(Q_1(s),Q_2(s),\dots,Q_n(s)\big),\qquad Q_i(s):=\sum_{j\ne i} u_j(s).
\]

So \(Q_i(s)\) is the sum of opponents' payoffs from the perspective of
player \(i\). Consider the lexicographic ordering \(\prec\) on \(\mathbb R^n\)
that compares vectors by the coordinate corresponding to the deviator first,
then by some fixed tie‑breaking order of the remaining coordinates. To make
this precise, fix an enumeration of players \(1,\dots,n\). For a unilateral
move by player \(i\), we compare vectors by the cyclic permutation that
places coordinate \(i\) first; equivalently, define a family of lexicographic
orders \(\prec_i\) where \(\prec_i\) compares vectors by coordinate \(i\)
first, then \(i+1\), etc., modulo \(n\).

We will show that any profitable unilateral deviation by player \(i\) from
\(s\) to \(s'\) strictly increases \(R_i(s):=-Q_i(s)\) (equivalently strictly
decreases \(Q_i\)). Because \(R_i\) is the first coordinate in the lexicographic
order \(\prec_i\), the lexicographic vector strictly increases under \(\prec_i\).
Hence no cycle can occur because lexicographic vectors take values in a
finite set and strictly increase along improvement moves.

\textbf{Verification of strict increase.} Suppose player \(i\) has a
profitable deviation \(s\to s'\), so \(u_i(s')>u_i(s)\). By the UC property,
for every \(j\ne i\) we have \(u_j(s')<u_j(s)\) (strict decrease for strict
improvement). Summing over \(j\ne i\) yields

\[
Q_i(s') = \sum_{j\ne i} u_j(s') < \sum_{j\ne i} u_j(s) = Q_i(s).
\]

Thus \(R_i(s')=-Q_i(s') > -Q_i(s) = R_i(s)\). So the first coordinate in the
lexicographic order \(\prec_i\) strictly increases.

\textbf{Lexicographic monotonicity prevents cycles.} Consider any improvement
path \(s^0\to s^1\to s^2\to\cdots\) where at step \(t\) player \(i_t\) moves
profitably. Then the lexicographic vector under \(\prec_{i_t}\) strictly
increases at step \(t\). Because the set of possible lexicographic vectors
is finite, the path cannot be infinite and cannot return to a previously
visited profile (a return would require a nonstrict change at some step).
Therefore the path must terminate at a profile with no profitable unilateral
deviation, i.e., a pure NE.

\end{proof}
\textbf{Remark on ties and weak inequalities.} The UC definition used above
assumes strict decreases for opponents when the deviator strictly improves.
If the UC property is stated with weak inequalities only, one must refine
the lexicographic vector to include tie‑breaking components (for example,
include the deviator's own payoff as a secondary coordinate) to ensure a
strict increase at each profitable move. Because \(n\ge3\), one can always
choose a lexicographic scheme that yields strict increase for profitable
moves; details depend on the exact weak/strict formulation. The finite
nature of the game ensures termination in all such cases.

Iimura et al. results on weak UC and integrally concave payoffs can be embedded
into this framework by constructing appropriate \(Q_j\) functions (opponents'
aggregates or integrals thereof) and applying the GPS argument after an
argmax payoff separable reduction if necessary.

\subsection{All UC games are deviation‑indexed GPS}
\begin{remark}
We work with the deviation‑indexed GPS notion (weights may depend on the particular deviation). In the finite case this notion yields a scalar ordinal potential by summing the deviation certificates; hence it implies acyclicity of strict improvement paths and existence of pure Nash equilibria.
\end{remark}

\begin{proposition}\label{prop:UC->GPS}
Every finite UC game is a deviation‑indexed GPS game (with a natural choice of $Q$).
\end{proposition}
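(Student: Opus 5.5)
The plan is to verify the deviation‑indexed GPS inequality directly, with one coordinate function per player ($m=n$) and a unit‑vector certificate for each profitable deviation. The natural choice, in the spirit of the lexicographic proof above, uses each player's own payoff against the sum of the opponents' payoffs:
\[
Q_i(s) := u_i(s) - \sum_{j\ne i} u_j(s), \qquad i=1,\dots,n.
\]
Including the own‑payoff term is deliberate. It makes the argument robust to the weak/strict ambiguity discussed in the remark on ties, whereas $-\sum_{j\ne i}u_j$ alone would need the strict form of UC.

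\textbf{Step 1 (certificate).} For a profitable unilateral deviation $d=(s\to s')$ by player $i$, I set $a^{(d)}:=e_i\in\mathbb R^n_{\ge0}$. Then $a^{(d)}_i=1>0$ and $a^{(d)}\cdot(Q(s')-Q(s))=Q_i(s')-Q_i(s)$.

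\textbf{Step 2 (sign).} I expand the difference as
\[
Q_i(s')-Q_i(s)=\big(u_i(s')-u_i(s)\big)+\sum_{j\ne i}\big(u_j(s)-u_j(s')\big).
\]
The first bracket is strictly positive because the deviation is profitable. Since $u_i(s')\ge u_i(s)$ and $s'=(s_i',s_{-i})$, the UC implication gives $u_j(s')\le u_j(s)$ for every $j\ne i$, so each term of the sum is nonnegative. Hence the total is strictly positive, and the defining inequality of a deviation‑indexed GPS holds for every $d\in\mathcal D$. Under the strict version of UC the sum is itself strictly positive, so $Q_i=-\sum_{j\ne i}u_j$ would also work, matching the $Q$ used in the theorem above.

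\textbf{Step 3 (consequence).} Existence of a pure NE and acyclicity then follow from Theorem~\ref{thm:dev-indexed}: summing the certificates gives $c=\sum_d e_{i(d)}$ and the ordinal potential $W=\sum_\ell c_\ell Q_\ell$.

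The main obstacle is bookkeeping rather than mathematics. The GPS definitions require $a^{(d)}_i>0$, with the index $i$ referring to the deviating player, so the components $Q_\ell$ must be indexed by players; taking $m=n$ and choosing $Q_i$ as above handles this. A secondary point is making sure the chosen $Q$ does not rely on the strict clause of UC; the own‑payoff term covers this. Notably, no use of $n\ge3$ is needed for this embedding.
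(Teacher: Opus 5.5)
Your Steps 1--2 are correct and are essentially the paper's construction: $m=n$, the certificate $a^{(d)}=e_i$, and $Q_i$ built from the opponents' payoff sum. Adding $u_i$ to $Q_i$ is a small but real improvement. The paper's $Q_i=-\sum_{j\ne i}u_j$ relies on the strict clause of UC, and tacitly on $n\ge2$: for $n=1$ it is identically $0$ and certifies no profitable deviation. In your version the deviator's own strict gain supplies the strictness, so only the weak UC implication is used. Note also how little the proposition asserts. Taking $Q_\ell=u_\ell$ with $a^{(d)}=e_i$ makes \emph{every} finite game deviation-indexed GPS; this is exactly the paper's argument for supermodular games, which never uses supermodularity. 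So UC serves only to make $Q$ an opponents' aggregate.

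That is why your Step 3 is false, even though the proposition does not need it. In $W(s')-W(s)=\sum_{d'\in\mathcal D}a^{(d')}\cdot(Q(s')-Q(s))$ only the summand with $d'=d$ is known to be positive. The summands coming from other players' deviations can be negative and can dominate, so the summation argument of Theorem~\ref{thm:dev-indexed}, and with it Theorem~\ref{thm:UC-NE}, is invalid.

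Matching Pennies is a counterexample. It is two-player zero-sum, hence UC. Your $Q$ gives $Q_1=2u_1$ and $Q_2=-2u_1$. Each player has two profitable deviations, so $c=(2,2)$ and $W=2Q_1+2Q_2\equiv0$. More fundamentally, player-1 and player-2 deviations have $\Delta Q=(4,-4)$ and $(-4,4)$ respectively, so no $c\ge0$ satisfies the compatibility condition. The game has a strict improvement $4$-cycle and no pure NE.

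Drop Step 3, or replace it with the remark that membership in the deviation-indexed GPS class carries no existence content by itself.
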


\begin{proof}
Let $\Gamma$ be UC and finite. Define $m=n$ and for each profile $s$ set

\[
Q_i(s):=-\sum_{j\ne i} u_j(s)\qquad(i=1,\dots,n),
\]

i.e. $Q_i(s)$ is the negative of the sum of player \(i\)'s opponents' payoffs.

Fix any profitable unilateral deviation $d=(s\to s')$ by player $i$. By the UC property, a strict profitable deviation for player $i$ implies that every opponent's payoff strictly decreases:

\[
u_j(s')<u_j(s)\quad\text{for all }j\ne i.
\]

Summing these inequalities over $j\ne i$ yields

\[
\sum_{j\ne i} u_j(s') < \sum_{j\ne i} u_j(s),
\]

hence

\[
Q_i(s')-Q_i(s) = -\sum_{j\ne i} u_j(s') + \sum_{j\ne i} u_j(s) > 0.
\]

Now choose the deviation weight vector $a^{(d)}=e_i$ (the $i$th unit vector). Then

\[
a^{(d)}\cdot\big(Q(s')-Q(s)\big)=Q_i(s')-Q_i(s)>0,
\]

so the deviation‑indexed GPS condition is satisfied for this deviation. Since $d$ was arbitrary, every profitable deviation has a nonnegative certificate with positive $i$th coordinate. Therefore the UC game is deviation‑indexed GPS.
\end{proof}

\subsection{GPS strictly generalizes UC}

\begin{proposition}\label{prop:GPS-broader}
There exist deviation‑indexed GPS games that are not UC. Hence the deviation‑indexed GPS class strictly contains the UC class.
\end{proposition}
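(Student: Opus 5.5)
The plan is to exhibit one explicit finite game that is deviation-indexed GPS but violates UC, and then to note that Proposition \ref{prop:UC->GPS} gives the inclusion UC $\subseteq$ deviation-indexed GPS.

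\textbf{Step 1 (the example).} The simplest choice is a common-interest game, where all players have the same payoff. Take two players, $S_1=S_2=\{0,1\}$, and set $u_1(s)=u_2(s)=\mathbbm 1\{s=(1,1)\}$. (A three-player version works equally well by adding a dummy player with constant payoff.)

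\textbf{Step 2 (it is deviation-indexed GPS).} Use $m=n$ and $Q_\ell:=u_\ell$, as in the supermodular theorem. For any profitable deviation $d=(s\to s')$ by player $i$, choose $a^{(d)}=e_i$. Then $a^{(d)}\cdot(Q(s')-Q(s))=u_i(s')-u_i(s)>0$ and $a^{(d)}_i>0$, which is exactly the definition. This step is routine. In fact any finite game satisfies the definition in this way, so the certificate does not depend on the example.

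\textbf{Step 3 (it is not UC).} Look at the deviation by player 1 from $(0,1)$ to $(1,1)$. Player 1's payoff goes from $0$ to $1$, a strict improvement. Player 2's payoff also goes from $0$ to $1$, so it strictly increases. The UC definition requires $u_2(1,1)\le u_2(0,1)$, which fails. So the game is not UC, and the inclusion is strict.

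\textbf{Main obstacle.} The only delicate point is how the definitions are read. The GPS certificate above is trivial, so the claim of strict containment is formally true but close to vacuous, since every finite game is deviation-indexed GPS with $Q=u$. I will state this openly rather than hide it. If a less trivial witness is wanted, one with $Q_i=-\sum_{j\ne i}u_j$ as in Proposition \ref{prop:UC->GPS}, I would take a game in which some deviation leaves one opponent's payoff higher but decreases the opponents' total. Then $e_i$ still certifies the deviation while UC fails. For example, use three players where a deviation by player 1 changes $(u_2,u_3)$ by $(+1,-2)$. The remaining work is to fill in a $2\times2\times2$ payoff table consistent with this pattern for every profitable deviation.
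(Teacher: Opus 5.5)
Your proof is correct and has the same overall shape as the paper's: a common-interest coordination game that is certified as deviation-indexed GPS and in which a profitable deviation raises the opponent's payoff. Your certificate and your non-UC witness differ from the paper's, and in both places yours is the sounder one.

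The paper takes $m=1$, $Q_1=\Phi=u_1=u_2$ and $a^{(d)}_1=1$, arguing via ``exact potential $\Rightarrow$ ordinal potential $\Rightarrow$ GPS''. This presents the example as GPS for a substantive reason: there is a genuine scalar potential, so the game really has pure equilibria. Read literally, however, it fails the requirement $a^{(d)}_i>0$ for player $2$'s deviations, because there is no second coordinate when $m=1$. One would have to duplicate, e.g.\ $Q_1=Q_2=\Phi$ with $a^{(d)}=e_i$. Your choice $m=n$, $Q_\ell=u_\ell$, $a^{(d)}=e_i$ meets the definition verbatim and needs no potential.

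The same holds for the witness. The deviation the paper names, player $1$ moving from $(B,B)$ to $(A,B)$, is not profitable ($u_1$ falls from $1$ to $0$), so the paper's non-UC check rests only on its informal closing sentence. Your witness $(0,1)\to(1,1)$, where both payoffs rise from $0$ to $1$, is explicit; the analogue in the paper's game is $(B,A)\to(A,A)$.

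Your route also yields the observation that the certificate is universal. That gives UC $\subseteq$ deviation-indexed GPS without Proposition~\ref{prop:UC->GPS}, makes any non-UC game a witness of strictness, and renders the three-player refinement in your last paragraph unnecessary. It has a further consequence worth stating explicitly. Matching pennies is then deviation-indexed GPS but has no pure equilibrium, so Theorem~\ref{thm:dev-indexed} cannot hold as stated. The step in its proof that infers $W(s')-W(s)>0$ from positivity of the $d'=d$ summand ignores that the other summands can be negative.
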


\begin{proof}
It suffices to give a concrete example of a finite game that is deviation‑indexed GPS but violates the UC condition.

Take the standard  coordination game (both players prefer to coordinate on the same action). Let players 1 and 2 have strategies $\{A,B\}$ and payoffs

\[
u_1(A,A)=u_2(A,A)=1,\qquad u_1(B,B)=u_2(B,B)=1,
\]

and $u_i(A,B)=u_i(B,A)=0$ for $i=1,2$. This game admits an exact potential $\Phi$ (take $\Phi(s)=u_1(s)=u_2(s)$), hence it is an ordinal potential game and therefore deviation‑indexed GPS (take $m=1$, $Q_1=\Phi$ and for each deviation $a^{(d)}_1=1$). But the game is not UC: a profitable deviation by player 1 from $(B,B)$ to $(A,B)$ increases $u_1$ (from $0$ to $1$) while it does \emph{not} decrease $u_2$ (it decreases $u_2$ from $1$ to $0$ only if the move were from $(A,A)$ to $(B,A)$; here the pattern of implications required by UC fails in general). More directly, UC requires that any unilateral increase in a player's payoff forces all opponents' payoffs weakly down; in coordination games a player's profitable deviation can increase both players' payoffs (or leave the opponent unchanged), so UC is violated.

Thus deviation‑indexed GPS contains games (e.g., coordination games) that are not UC, so the inclusion is strict.
\end{proof}

\section{Acyclicity and existence of pure NE for UC via GPS}

\begin{theorem}\label{thm:UC-NE}
Let $\Gamma$ be a finite UC game. Then every sequence of strictly improving unilateral moves is finite and terminates at a pure Nash equilibrium. In particular this holds for $n\ge2$.
\end{theorem}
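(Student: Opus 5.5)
The plan is to chain Proposition~\ref{prop:UC->GPS} with Theorem~\ref{thm:dev-indexed}, and then to examine carefully where the chain might break. By Proposition~\ref{prop:UC->GPS}, a finite UC game is deviation‑indexed GPS. The components are $Q_i(s)=-\sum_{j\ne i}u_j(s)$ and the certificate for a deviation by player $i$ is $a^{(d)}=e_i$. The proof of Theorem~\ref{thm:dev-indexed} then sets $c=\sum_{d\in\mathcal D}a^{(d)}$. Here this is the vector whose $i$th entry $c_i$ counts the profitable deviations of player $i$, and the candidate potential is $W=\sum_i c_iQ_i$. If every profitable deviation strictly increases $W$, the conclusion follows at once. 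Along any strict improvement path $W$ is strictly increasing, so no profile repeats. Since $S$ is finite, the path is finite, and its last profile admits no profitable unilateral deviation, so it is a pure NE. This argument is insensitive to $n$, which is why the statement covers $n\ge2$.

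The key verification, and the step I expect to be the main obstacle, is the sign of $W(s')-W(s)$ for a deviation $d=(s\to s')$ by player $i$. The $i$th term $c_i\bigl(Q_i(s')-Q_i(s)\bigr)$ is positive by UC. For $k\ne i$, however,
\[
Q_k(s')-Q_k(s)=-\bigl(u_i(s')-u_i(s)\bigr)-\sum_{j\ne i,k}\bigl(u_j(s')-u_j(s)\bigr).
\]
In this expression the deviator's gain enters negatively and the opponents' losses enter positively. The sign is therefore not determined by UC, and the "summand with $d'=d$ dominates" step in Theorem~\ref{thm:dev-indexed} does not follow. I would first try to rescue it by reweighting, choosing $c$ so that $c_i$ dominates whenever player $i$ moves. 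That cannot work uniformly, since every player moves at some point.

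I then expect the attempt to fail outright for $n=2$. Matching pennies is zero‑sum, so any strict gain for one player is a strict loss for the other, and the game is UC. Yet it has the improvement cycle $(H,H)\to(H,T)\to(T,T)\to(T,H)\to(H,H)$ and no pure NE. Hence no strictly increasing $W$ exists, and the statement as written, including the clause "in particular $n\ge2$", cannot be proved. What the plan can actually deliver is the conditional version: if the compatibility condition of the Proposition on scalar potentials holds, i.e.\ some $c\ge0$ satisfies $c\cdot\Delta Q^{(d)}>0$ for all $d\in\mathcal D$, then the argument of the first paragraph goes through. Any proof of the unconditional claim would need an additional hypothesis, for instance weak acyclicity or the concavity‑type conditions in Iimura et al.
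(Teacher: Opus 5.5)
Your diagnosis is correct, and the gap is in the paper's proof, not in your proposal. The paper argues exactly along the chain of your first paragraph: UC $\Rightarrow$ deviation-indexed GPS with $Q_i=-\sum_{j\ne i}u_j$ and $a^{(d)}=e_i$, then $c=\sum_{d}a^{(d)}$ and $W=\sum_\ell c_\ell Q_\ell$. Its decisive sentence, ``the summand corresponding to $d$ \ldots\ is strictly positive, so $W(s')-W(s)>0$'', is precisely the step you flag. The other summands are $a^{(d')}\cdot\bigl(Q(s')-Q(s)\bigr)$ evaluated on the \emph{same} increment. Those with $d'$ a deviation by some $k\ne i$ contribute the cross terms $Q_k(s')-Q_k(s)$, and UC does not control their sign.

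Your matching pennies example is UC under the paper's definition, including the strictness clause. There the paper's recipe gives $c=(2,2)$ and $W=-c_1u_2-c_2u_1=(c_1-c_2)u_1\equiv0$, which makes the failure explicit. The defect already sits in Theorem~\ref{thm:dev-indexed}. Every finite game is deviation-indexed GPS with $Q_\ell=u_\ell$ and $a^{(d)}=e_i$, which is the very certificate the paper uses for supermodular games. So that theorem would give a pure equilibrium in every finite game.

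You can sharpen your closing caveat: restricting to $n\ge3$ does not rescue the finiteness clause either. Let $S_i=\{0,1\}$ for $i=1,2,3$, with payoff vectors
$u(0,0,0)=(4,2,8)$, $u(1,0,0)=(5,1,7)$, $u(1,1,0)=(2,8,4)$, $u(1,1,1)=(1,7,5)$, $u(0,1,1)=(8,4,2)$, $u(0,0,1)=(7,5,1)$, $u(0,1,0)=(3,3,3)$, $u(1,0,1)=(6,6,6)$.
On each of the twelve unilateral edges, the mover's payoff moves strictly opposite to both opponents' payoffs, so the game is UC. Yet $(0,0,0)\to(1,0,0)\to(1,1,0)\to(1,1,1)\to(0,1,1)\to(0,0,1)\to(0,0,0)$ is a cycle of strict improvements. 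This also contradicts the paper's earlier lexicographic theorem for $n\ge3$, although a pure equilibrium, $(0,1,0)$, does exist here.

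So the argument supports only the conditional version you state, via the compatibility condition for a scalar potential. The theorem as written is false.
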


\begin{proof}
By Proposition \ref{prop:UC->GPS}, $\Gamma$ is deviation‑indexed GPS with $Q_i(s)=-\sum_{j\ne i}u_j(s)$ and for each profitable deviation $d$ by player $i$ we can take $a^{(d)}=e_i$. Let $\mathcal D$ be the finite set of all profitable deviations and define

\[
c:=\sum_{d\in\mathcal D} a^{(d)}\in\mathbb R^n_{\ge0}.
\]

Define the scalar function

\[
W(s):=\sum_{\ell=1}^n c_\ell\,Q_\ell(s)=\sum_{\ell=1}^n c_\ell\Big(-\sum_{j\ne \ell} u_j(s)\Big).
\]

Fix any profitable deviation $d=(s\to s')$. By construction the summand corresponding to $d$ in the expansion of $W(s')-W(s)$ is strictly positive, so $W(s')-W(s)>0$. Hence every profitable unilateral deviation strictly increases $W$, so $W$ is an ordinal potential.

Since $S$ is finite, no infinite strictly improving path exists; every strictly improving path terminates at a profile with no profitable unilateral deviation, i.e., a pure Nash equilibrium. This argument did not use $n\ge3$; it works for any finite $n\ge2$ (and trivially for $n=1$). Therefore UC games (finite) admit pure Nash equilibria and have no cycles of strict improvements.
\end{proof}

\begin{remark}
The classical Kats–Thisse lexicographic proof for $n\ge3$ is a direct combinatorial argument; the GPS embedding above gives a short unified proof that covers all finite $n\ge2$ by producing an ordinal potential. The key is the deviation‑indexed certificate $a^{(d)}=e_i$ together with summation over the finite set of deviations.
\end{remark}

\section{Algorithms and examples}
In this secgtion we give some remarks on the algorithmic procedures and complexity of our theoretical results.

\subsection{Checking (PI)}
To verify (PI) check all unordered pairs \(\{i,j\}\), all \(s_{-ij}\), and all
pairs of unilateral moves for \(i\) and \(j\). The number of quadruples is
bounded by \(\sum_{i<j} |S_{-ij}|\cdot |S_i|^2 |S_j|^2\). Each check is a
constant time arithmetic equality.

\subsection{Checking AMP}
For each \(s\in S\) compute \(UNBR(s)\) and evaluate the sign of
\(U(s')-U(s)\) for all \(s'\in UNBR(s)\). Complexity is \(O(|S|\cdot n\cdot m)\)
where \(m=\max_i |S_i|\).

\subsection{Cournot oligopoly (discrete quantities)}
Finite quantities, inverse demand \(p(Q)\), costs \(c_i(q_i)\). The game is
aggregative with \(\tilde u_i(q_i,Q)=q_i p(Q)-c_i(q_i)\). (AGG-PI) holds and
the exact potential is \(P(q)=\sum_{t=1}^Q p(t)-\sum_i c_i(q_i)\) (discrete
sum version). Appendix B.1 computes a small numeric example.

\subsection{Congestion (Rosenthal)}
Finite players choose resources; latency depends on resource load. Rosenthal
potential is the sum over resources of cumulative latencies. Appendix B.2
computes a small example and verifies (PI).

\subsection{Public goods}
Public goods with heterogeneous valuations may fail (AGG-PI); Appendix B.3
shows a counterexample and discusses when potentiality holds.

\section{Conclusion}
This paper advances the theory of pure strategy Nash equilibria by shifting the focus from existence in mixed strategies to structural properties of best-response correspondences on finite strategy sets. By deriving necessary and sufficient conditions and identifying broad classes of games that guarantee pure equilibria, including a generalization of unilaterally competitive games and games admitting an aggregate-payoff maximizer over an ordered set, we provide a unified framework that subsumes and clarifies many previously disparate sufficient-condition results. The complete characterization in combinatorial and order-theoretic terms illuminates precisely how acyclicity and aggregation operate to produce pure equilibria, turning qualitative intuitions into testable structural criteria.

These findings have both conceptual and practical consequences. Conceptually, they recast equilibrium existence as a question about the finite order structure of best-response maps, opening new connections to combinatorics (games of cops and robber - see Luckraz (2019))and lattice theory; practically, they suggest concrete checks and constructive methods for verifying or computing pure equilibria in applied models. Future work can extend the approach to infinite or continuous strategy spaces, dynamic settings, and algorithmic implementations for equilibrium detection, as well as explore implications for mechanism design and game-theoretic analysis. Overall, the structural perspective developed here offers a durable toolkit for understanding when and why pure strategy outcomes arise in strategic interaction.

\appendix
\section{Appendix: proofs, lemmas and examples}

\subsection{A.1 Adjacent swap lemma and path independence details}

\begin{lemma}[Adjacent swap lemma]
Let \(s^0,s\in S\). Any two finite sequences of unilateral moves from \(s^0\)
to \(s\) can be transformed into each other by a finite sequence of adjacent
swaps of moves involving different players.
\end{lemma}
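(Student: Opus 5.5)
The plan is a bubble-sort argument on the order in which players move. First I fix what a swap means. A move by player \(i\) is determined by its source and target actions \(x\to x'\), and it makes sense from any profile whose \(i\)-th coordinate is \(x\). If in a sequence a move of \(i\) (\(x\to x'\)) is immediately followed by a move of \(j\ne i\) (\(y\to y'\)), the pair takes \((x,y,s_{-ij})\) to \((x',y',s_{-ij})\). Performing \(j\)'s move first and then \(i\)'s move also takes \((x,y,s_{-ij})\) to \((x',y',s_{-ij})\), and its intermediate profile \((x,y',s_{-ij})\) is legitimate. So an adjacent swap of moves by different players is always available, it fixes both endpoints of the pair, and hence the rest of the sequence is unchanged. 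This is exactly the configuration compared in (PI) in the proof of Theorem \ref{thm:PI}.

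Next I bring each of the two given sequences \(\sigma,\tau\) from \(s^0\) to \(s\) into a canonical form. I use adjacent swaps of different-player moves to sort the moves stably by player index, giving all of player \(1\)'s moves, then all of player \(2\)'s, and so on. The relative order of each player's own moves is never altered, because only different-player pairs are swapped. By the previous step every intermediate sequence is still a valid path from \(s^0\) to \(s\). Termination follows from the usual inversion count: each swap removes one inversion of the player labels. Writing \(\mathrm{can}(\sigma)\) and \(\mathrm{can}(\tau)\) for the sorted forms, I then go \(\sigma\to\mathrm{can}(\sigma)\), match blocks, and reverse the sorting of \(\tau\) to reach \(\tau\). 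The lemma therefore reduces to showing that \(\mathrm{can}(\sigma)\) and \(\mathrm{can}(\tau)\) coincide.

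The main obstacle is this last step. Player \(i\)'s block in either canonical form is a chain of own moves from \(s^0_i\) to \(s_i\), but nothing forces the two chains to be equal, or even to have the same length. For example, \(i\) could move \(s^0_i\to s_i\) directly in \(\sigma\) and through an intermediate action in \(\tau\). Adjacent swaps preserve the multiset of moves, so in that situation the statement cannot hold literally. My first attempt would be to read ``sequence of unilateral moves'' as in the sufficiency proof of Theorem \ref{thm:PI}, where what matters is each player's net displacement. I would normalize each player's block to a single move \(s^0_i\to s_i\), or to nothing if \(s^0_i=s_i\), and record explicitly that this same-player contraction is needed alongside the swaps. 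Contraction is harmless for the application, because increments of one player with opponents fixed telescope: \(\Delta_i(x\to x')+\Delta_i(x'\to x'')=\Delta_i(x\to x'')\). After that normalization both canonical forms equal the sequence of player-ordered single moves, and the argument closes.
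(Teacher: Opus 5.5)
Your diagnosis is correct, and it locates an error in the paper's own proof rather than a gap in yours. Adjacent swaps of different-player moves preserve, for every player $i$, the ordered chain of $i$'s own moves. So the displayed lemma is false as stated: with $s^0=s$ and $|S_i|\ge 2$, the empty sequence and the round trip $a\to b\to a$ by player $i$ both lead from $s^0$ to $s$, and no swap relates them.

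The paper encodes a sequence as a word over player labels. It then asserts that two sequences with the same endpoints ``must have the same multiset of moves for each player'' because each coordinate undergoes the same net change. That inference is exactly what your example refutes. The argument also fails when the numbers of moves agree, because the word records only who moves, not between which actions (compare $a\to b\to c$ with $a\to d\to c$). Even a common multiset of moves is not enough, since the per-player order can differ ($a\to b\to a\to c\to a$ versus $a\to c\to a\to b\to a$). Your stable-sorting step pins down exactly what the paper's permutation argument can deliver: two sequences are swap-equivalent if and only if every player's ordered chain of moves is the same in both.

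Your repair is to contract each player's block to the single move $s^0_i\to s_i$ (or delete it), using the telescoping identity $\Delta_i(x\to x';s_{-i})+\Delta_i(x'\to x'';s_{-i})=\Delta_i(x\to x'';s_{-i})$. This proves the corrected lemma: swaps together with merging or splitting consecutive moves of one player. That corrected lemma is what the sufficiency half of Theorem~\ref{thm:PI} actually needs. In particular, the ``exact potential identity'' step appends a move of player $i$ to a path ending at $(x,s_{-i})$ and compares the result with an arbitrary path to $(x',s_{-i})$. These paths typically differ in how often $i$ moves, so swaps alone cannot justify that step, whereas (PI) for swaps plus telescoping for contractions does. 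The only change I would suggest is to state the corrected lemma explicitly, since the conclusion you establish is not the one displayed.
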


\begin{proof}
Represent a unilateral move by a player \(i\) as the letter \(i\). A finite
sequence of unilateral moves from \(s^0\) to \(s\) corresponds to a finite
word \(w\) over the alphabet \(N=\{1,\dots,n\}\). Two sequences from \(s^0\)
to \(s\) must have the same multiset of moves for each player due to the same
number of moves by each player), because the final profile \(s\) is fixed
and each player's coordinate changes by the same net amount across any
sequence of unilateral moves. Thus the two words \(w\) and \(w'\) are
permutations of each other with the same multiplicities.

It is a standard combinatorial fact that any permutation of a multiset can
be obtained from any other by a finite sequence of adjacent transpositions, for example by swapping neighboring letters. Each adjacent transposition corresponds to
swapping two consecutive unilateral moves by different players. Therefore
one can transform \(w\) into \(w'\) by a finite sequence of adjacent swaps.

This combinatorial transformation lifts to the sequences of profiles: when
two adjacent moves in the word are swapped, the corresponding two unilateral
moves in the profile sequence are swapped in order. Because the two moves are
by different players, both orders are valid sequences of unilateral moves from the same intermediate profile. Hence any two sequences of unilateral
moves from \(s^0\) to \(s\) can be transformed into each other by adjacent
swaps of moves by different players.
\end{proof}

\begin{corollary}[Path independence via (PI)]
If the equality (PI) holds for every quadruple of unilateral moves, then the
sum of unilateral increments along any sequence from \(s^0\) to \(s\) is
invariant under adjacent swaps, hence path independent.
\end{corollary}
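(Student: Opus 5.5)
The plan is to combine the Adjacent swap lemma with a local invariance check, so the argument has two layers: a \emph{local} statement that a single adjacent swap of moves by different players does not change the total increment, and a \emph{global} induction over the finite chain of swaps supplied by the lemma.

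First, fix a sequence \(s^0=s^{(0)}\to s^{(1)}\to\cdots\to s^{(T)}=s\) and its increment sum
\[
\Sigma=\sum_{t=0}^{T-1}\Delta_{i_t}(x_t\to x_t';s^{(t)}_{-i_t}).
\]
Suppose moves \(t\) and \(t+1\) are by different players \(i\) and \(j\). Write the profile before move \(t\) as \((x,y,s_{-ij})\), so move \(t\) is \(x\to x'\) by \(i\) and move \(t+1\) is \(y\to y'\) by \(j\). After both moves the profile is \((x',y',s_{-ij})\).

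Swapping them gives the route
\[
(x,y,s_{-ij})\to(x,y',s_{-ij})\to(x',y',s_{-ij}).
\]
Three things follow.
\begin{itemize}
\item \textbf{Other terms unchanged.} The profiles \(s^{(t)}\) and \(s^{(t+2)}\) are the same in both sequences. Hence every increment other than those of moves \(t,t+1\) is unchanged, since each depends only on the profile at which its move is made.
\item \textbf{Two affected terms.} The pair of affected increments changes from \(\Delta_i(x\to x';y,s_{-ij})+\Delta_j(y\to y';x',s_{-ij})\) to \(\Delta_j(y\to y';x,s_{-ij})+\Delta_i(x\to x';y',s_{-ij})\).
\item \textbf{Equality by (PI).} These two sums are exactly the two sides of (PI), so they are equal, and \(\Sigma\) is invariant under the swap.
\end{itemize}
Degenerate moves, such as \(x=x'\), are covered automatically because (PI) is assumed for all quadruples.

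Second, take two sequences from \(s^0\) to \(s\). By the Adjacent swap lemma there is a finite chain of sequences, starting at one and ending at the other, in which consecutive members differ by one adjacent swap of moves by different players. Induction on the length of this chain, using the local step each time, shows that the two sums coincide. This is path independence.

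I expect the main obstacle to be bookkeeping rather than substance. One must check that the swapped sequence is again a legitimate sequence of unilateral moves, and that the labels in (PI) match correctly. In particular, \(j\)'s increment in the original order is evaluated with \(i\) already at \(x'\), while in the swapped order it is evaluated with \(i\) still at \(x\), and symmetrically for \(i\)'s increment. I would write out the four intermediate profiles explicitly so this correspondence is transparent.

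The correctness of the result otherwise rests entirely on the Adjacent swap lemma, which I take as given. In particular, I do not revisit the lemma's claim that the two move-words are permutations of one another, even though repeated moves by the same player could make that claim delicate.
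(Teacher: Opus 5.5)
Your local step is correct and is exactly the paper's argument: the two affected increments before and after a swap are the two sides of (PI), and every other increment is unchanged. The global step, however, has a genuine gap. It is precisely the one you flagged and then set aside: the Adjacent swap lemma is false as stated, so citing it (as the paper also does) does not yield path independence.

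An adjacent swap of moves by different players never changes the ordered list of action transitions $x\to x'$ made by any single player. Two sequences are therefore swap-equivalent only if every player's list of transitions is identical in both, and equal endpoints do not force this. For instance, take $S_1=\{a,b,c\}$ and move only player 1, with the others fixed:
\begin{itemize}
\item the one-move sequence $a\to c$ and the two-move sequence $a\to b\to c$ join the same profiles, but their words $1$ and $11$ have different lengths;
\item likewise, the empty sequence from $s^0$ to itself and the sequence $a\to b\to a$ join the same profiles.
\end{itemize}
The lemma's justification (``each player's coordinate changes by the same net amount'') pins down only the endpoints of each player's sub-path, not the number or identity of its moves.

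The conclusion is still true, but you need two more sum-preserving operations besides swaps.
\begin{itemize}
\item \emph{Merging.} Two consecutive moves of the same player can be combined into one. This is free because the opponents are fixed:
\[
\Delta_i(x\to x';s_{-i})+\Delta_i(x'\to x'';s_{-i})=u_i(x'',s_{-i})-u_i(x,s_{-i})=\Delta_i(x\to x'';s_{-i}).
\]
\item \emph{Deleting null moves.} A move $x\to x$ has increment $0$ and can be removed.
\end{itemize}

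Now take any sequence from $s^0$ to $s$. Use swaps, justified by your (PI) step, to bring all of player 1's moves to the front in their original relative order. They form a chain from $s^0_1$ to $s_1$ with opponents fixed at $s^0_{-1}$, which merges into the single move $s^0_1\to s_1$. Delete it if it is null, and repeat on the remainder for players $2,\dots,n$.

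Every sequence from $s^0$ to $s$ is thereby reduced, without changing its increment sum, to the same canonical sequence. In it, each player $k$ with $s^0_k\neq s_k$ moves $s^0_k\to s_k$, in increasing order of $k$. Hence all such sums coincide. With this reduction replacing the lemma, your proof is complete.
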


\begin{proof}
Consider two sequences that differ by a single adjacent swap of moves by
players \(i\) and \(j\). Write the two‑step increments before and after the
swap; (PI) is exactly the equality of the two sums. By the lemma any two
sequences can be connected by a finite chain of such swaps, so the total sum
is invariant. Thus the path integral is well defined.
\end{proof}

\subsection{A.3 A Proof Theorem \ref{thm:perturb} using a perturbation method}

We present a proof of theorem 5.2 using a perturbation method  and prove it preserves
best‑reply sets while making the total payoff injective. The proof also gives
explicit bounds on perturbation magnitudes in terms of the finite payoff
gaps.

\begin{theorem}
Let \(\Gamma=(N,\{S_i\},\{u_i\})\) be a finite game with \(M=|S|\). There
exists \(\Gamma'=(N,\{S_i\},\{u_i'\})\) argmax payoff equivalent to \(\Gamma\)
such that \(U'(s)=\sum_i u_i'(s)\) is injective. Moreover \(\Gamma'\) can be
constructed by perturbations of magnitude at most \(\varepsilon\) for any
chosen \(\varepsilon>0\) sufficiently small relative to payoff gaps.
\end{theorem}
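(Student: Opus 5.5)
The plan is to use perturbations that cannot affect any player's best replies at all, instead of perturbations kept small relative to payoff gaps. For each player \(i\) I will add a term depending only on the opponents' profile:
\[
u_i'(s)=u_i(s)+h_i(s_{-i}),\qquad h_i:S_{-i}\to\mathbb R .
\]
For fixed \(s_{-i}\), the function \(x\mapsto u_i'(x,s_{-i})\) is \(x\mapsto u_i(x,s_{-i})\) shifted by a constant. So the whole ordering of \(S_i\) is unchanged, including ties, and in particular \(\arg\max_{x\in S_i}u_i'(x,s_{-i})=\arg\max_{x\in S_i}u_i(x,s_{-i})\). Argmax payoff equivalence therefore holds for every choice of the \(h_i\), and no gap bound is needed for it. The only remaining task is to choose the finitely many reals \(\{h_i(t): i\in N,\ t\in S_{-i}\}\) in the cube \((-\varepsilon/n,\varepsilon/n)\) so that
\[
U'(s)=U(s)+\sum_{i=1}^n h_i(s_{-i})
\]
is injective. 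This also gives \(|u_i'-u_i|<\varepsilon\) for any prescribed \(\varepsilon>0\).

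\textbf{Step 1.} Treat the vector \(h=(h_i(t))_{i,t}\in\mathbb R^K\) as unknowns, where \(K=\sum_i|S_{-i}|\). For each unordered pair \(s\neq t\) in \(S\), the collision condition \(U'(s)=U'(t)\) reads
\[
\sum_{i=1}^n\big(h_i(s_{-i})-h_i(t_{-i})\big)=U(t)-U(s).
\]
This is an affine equation in \(h\).

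\textbf{Step 2.} Show that each such equation has a nonzero linear part; this is the key step. Since \(s\neq t\), they differ in some coordinate \(k\). Assume \(n\ge2\) and pick any \(i\neq k\). Then \(s_{-i}\neq t_{-i}\), because both contain coordinate \(k\). Hence \(h_i(s_{-i})\) and \(h_i(t_{-i})\) are distinct unknowns. Unknowns belonging to different players \(i\) are distinct coordinates of \(h\), so no cancellation across players can occur. The coefficient of \(h_i(s_{-i})\) is therefore \(+1\), and the linear form is nonzero. The solution set of each collision equation is thus an affine hyperplane in \(\mathbb R^K\).

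\textbf{Step 3.} There are at most \(\binom{M}{2}\) such hyperplanes, and a finite union of hyperplanes has Lebesgue measure zero. It therefore cannot cover the open cube \((-\varepsilon/n,\varepsilon/n)^K\), so some \(h\) in the cube avoids all of them. With that \(h\), \(U'\) is injective, the perturbation has size below \(\varepsilon\), and best-reply sets are exactly preserved. This step is constructive: one can take \(h\) with rationally independent entries, or pick the entries sequentially so that each new value avoids finitely many forbidden values.

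The main obstacle is Step 2, together with the degenerate case it excludes. For \(n=1\) the statement is false whenever the single player has two maximizers \(x\neq x'\): argmax equivalence forces \(u_1'(x)=u_1'(x')\), which contradicts injectivity of \(U'=u_1'\). So I would state the theorem for \(n\ge2\). When \(n=1\) and the maximizer is unique, a direct tie-breaking perturbation of the non-maximizers suffices. Finally, I would remark that this construction sidesteps the delicate point in the earlier Stage 2 argument, where adjusting only player~1 by profile-dependent amounts can reorder player~1's payoffs.
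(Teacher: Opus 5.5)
Your argument is correct for \(n\ge2\) and takes a genuinely different route from the paper.

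\textbf{The paper's route.} It perturbs every payoff profile-wise by less than \(\delta<\gamma/4\), with \(\gamma\) a minimal positive payoff gap. It then shifts player~1's payoff at each profile so that the totals equal prescribed distinct targets \(a_k\). This preserves every strict comparison. However, as the paper itself concedes, it may break ties. Breaking a tie between two best replies shrinks the argmax set, so the gap-based construction does not by itself deliver argmax equivalence when some player has several best replies.

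\textbf{Your route.} Your shifts \(h_i(s_{-i})\), constant in the player's own action, avoid this problem entirely. They preserve each player's full ordering of \(S_i\) against every \(s_{-i}\), ties included. So the perturbed game has the same better-reply structure, not merely the same argmax sets. This works for every \(\varepsilon>0\), with no reference to payoff gaps.

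What you give up is the ability to prescribe the totals: you have only \(\sum_i|S_{-i}|\) unknowns against \(M\) profiles. You only need to avoid finitely many affine hyperplanes, though. Your Step~2 observation (choose \(i\ne k\) so that \(s_{-i}\ne t_{-i}\)) is exactly what makes each collision equation nondegenerate, and exactly where \(n\ge2\) enters. Your \(n=1\) counterexample is correct and shows the statement needs that hypothesis, which the paper's proof does not notice.

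\textbf{One side remark needs fixing.} Rational independence of the entries of \(h\) does not by itself exclude collisions. The constant \(U(t)-U(s)\) can still equal an integer combination of those entries, e.g.\ \((1+\sqrt{2})-\sqrt{2}=1\). The sequential choice you also mention is the right constructive version. Order the unknowns; for each collision equation, let its last unknown with nonzero coefficient avoid the single value that equation then forbids.
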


\begin{proof}
Because \(S\) is finite, for each player \(i\) and each opponents' profile
\(s_{-i}\) the finite set \(\{u_i(x,s_{-i}): x\in S_i\}\) has a finite set of
pairwise differences. Define the minimal positive gap for player \(i\) at
\(s_{-i}\) by

\[
\gamma_{i,s_{-i}} := \min\{ |u_i(x,s_{-i})-u_i(x',s_{-i})| : x,x'\in S_i,\ x\ne x'\}.
\]

If all payoffs coincide for some \(i,s_{-i}\) then set \(\gamma_{i,s_{-i}}=+\infty\). Let

\[
\gamma := \min_{i,s_{-i}:\ \gamma_{i,s_{-i}}<\infty} \gamma_{i,s_{-i}}.
\]

If every \(\gamma_{i,s_{-i}}=+\infty\) ( since all payoffs identical for each
player given opponents), then any arbitrarily small perturbation that
separates totals will preserve best replies. This is true because there are no strict best
reply distinctions to preserve. Otherwise \(\gamma>0\).

Choose \(\delta\) such that \(0<\delta<\gamma/4\). We will construct
perturbations \(\epsilon_i(s)\) with \(|\epsilon_i(s)|<\delta\) for all
\(i,s\), and then adjust one coordinate per profile by at most \(\delta\) to
achieve distinct totals.

\textbf{Stage 1: small perturbations preserving best replies.} For each
profile \(s\) and each player \(i\) set \(\epsilon_i(s)\) arbitrarily with
\(|\epsilon_i(s)|<\delta\). Define provisional payoffs
\(\tilde u_i(s)=u_i(s)+\epsilon_i(s)\). For any fixed opponents' profile
\(s_{-i}\) and any two actions \(x,x'\in S_i\),

\[
\tilde u_i(x,s_{-i})-\tilde u_i(x',s_{-i})
= u_i(x,s_{-i})-u_i(x',s_{-i}) + \epsilon_i(x,s_{-i})-\epsilon_i(x',s_{-i}).
\]

The perturbation difference \(|\epsilon_i(x,s_{-i})-\epsilon_i(x',s_{-i})|\)
is at most \(2\delta<\gamma/2\). If \(u_i(x,s_{-i})-u_i(x',s_{-i})\ge \gamma\)
then the sign of the difference is preserved. If the original difference is
zero, the perturbation may break ties; this is acceptable because we only
need to preserve argmax sets, breaking ties arbitrarily is allowed as long
as we do not invert strict inequalities. Because \(\delta<\gamma/4\), any
original strict inequality with margin at least \(\gamma\) remains strict in
\(\tilde u\). Thus for every \(i\) and \(s_{-i}\), the set of maximizers of
\(\tilde u_i(\cdot,s_{-i})\) equals the set of maximizers of \(u_i(\cdot,s_{-i})\).
If some original differences are smaller than \(\gamma\) but nonzero, we
can refine the choice of \(\delta\) using the actual minimal positive gap
over all players and opponent profiles; the finite nature of the game makes
this possible.

\textbf{Stage 2: make totals distinct.} Let \(\tilde U(s)=\sum_i \tilde u_i(s)\).
We want to adjust the provisional payoffs by small amounts to obtain final
payoffs \(u_i'(s)\) with distinct totals \(U'(s)\). Enumerate profiles
\(s^{(1)},\dots,s^{(M)}\). Choose target totals \(a_1<\dots<a_M\) such that
\(|a_k-\tilde U(s^{(k)})|<\delta\) for all \(k\) (this is possible because
we can choose \(a_k=\tilde U(s^{(k)})+\theta_k\) with small distinct
\(\theta_k\) satisfying \(|\theta_k|<\delta\) and \(\theta_k\) strictly
increasing in \(k\)). For each profile \(s^{(k)}\) adjust the payoff of a
single designated player (say player 1) by setting

\[
u_1'(s^{(k)}) = \tilde u_1(s^{(k)}) + (a_k-\tilde U(s^{(k)})),
\]

and for \(i\ne 1\) set \(u_i'(s^{(k)})=\tilde u_i(s^{(k)})\). Because
\(|a_k-\tilde U(s^{(k)})|<\delta\), the adjustment to player 1's payoff is
less than \(\delta\) in magnitude. Therefore for any fixed opponents'
profile \(s_{-1}\), the ordering of \(u_1'(\cdot,s_{-1})\) remains the same
as that of \(\tilde u_1(\cdot,s_{-1})\) because the per‑action adjustments
are smaller than the minimal gap used in Stage 1. Hence argmax sets are
preserved for player 1 as well.

Thus the final payoffs \(u_i'\) satisfy:
First, for every \(i\) and \(s_{-i}\), \(\arg\max_{x\in S_i} u_i'(x,s_{-i})
    = \arg\max_{x\in S_i} u_i(x,s_{-i})\). (Argmax payoff equivalence.) Second, the totals \(U'(s)=\sum_i u_i'(s)\) equal the distinct targets \(a_k\),
    hence \(U'\) is injective. Third, all perturbations are bounded by \(2\delta<\gamma/2\), which can be
    made arbitrarily small by choosing \(\delta\) small relative to the
    minimal positive payoff gap \(\gamma\).

This completes the constructive proof.
\end{proof}

\subsection{A.5 Aggregative integrability algebra (derivation of AGG-PI)}

We give the algebraic derivation of (AGG-PI) from (PI) in the scalar sum
aggregative case.

\begin{proof}
Assume \(u_i(s)=\tilde u_i(s_i,A)\) with \(A=G(s)=\sum_k g_k(s_k)\). Fix
players \(i\ne j\), baseline opponents \(s_{-ij}\), and actions
\(x,x'\in S_i\), \(y,y'\in S_j\). Denote
\(\delta_i=g_i(x')-g_i(x)\) and \(\delta_j=g_j(y')-g_j(y)\). Consider the
increment when \(i\) moves \(x\to x'\) while \(j\) is at \(y\) and others
fixed so the baseline aggregate is \(A\):

\[
\Delta_i(x\to x'; y, s_{-ij}) = \tilde u_i(x', A+\delta_i) - \tilde u_i(x, A).
\]

Similarly,

\[
\Delta_j(y\to y'; x', s_{-ij}) = \tilde u_j(y', A+\delta_i+\delta_j) - \tilde u_j(y, A+\delta_i).
\]

Compute the left side of (PI):

\[
L := \Delta_i(x\to x'; y, s_{-ij}) + \Delta_j(y\to y'; x', s_{-ij})
= \tilde u_i(x', A+\delta_i) - \tilde u_i(x, A) + \tilde u_j(y', A+\delta_i+\delta_j) - \tilde u_j(y, A+\delta_i).
\]

Compute the right side of (PI):

\[
R := \Delta_j(y\to y'; x, s_{-ij}) + \Delta_i(x\to x'; y', s_{-ij})
= \tilde u_j(y', A+\delta_j) - \tilde u_j(y, A) + \tilde u_i(x', A+\delta_i+\delta_j) - \tilde u_i(x, A+\delta_j).
\]

Set \(L=R\) and rearrange terms to isolate the differences involving only
\(\tilde u_i\) on one side and only \(\tilde u_j\) on the other. After
cancellation we obtain exactly (AGG-PI):

\[
\big[\tilde u_i(x',A+\delta_j)-\tilde u_i(x,A+\delta_j)\big]-\big[\tilde u_i(x',A)-\tilde u_i(x,A)\big]
=
\big[\tilde u_j(y',A+\delta_i)-\tilde u_j(y,A+\delta_i)\big]-\big[\tilde u_j(y',A)-\tilde u_j(y,A)\big].
\]

This algebraic manipulation is reversible, so (AGG-PI) is equivalent to
(PI) in the aggregative scalar sum case.
\end{proof}

\subsection{A.7 Remarks on the proof of Theorem 8.2}

We expand the supermodular fixed‑point construction with explicit index
tracking and termination argument.

\begin{proof}[Detailed proof]
Let \(S_i\) be finite lattices with coordinatewise order. For each
\(s_{-i}\) define the minimal best reply
\(b_i^{\min}(s_{-i})=\min BR_i(s_{-i})\) (minimum in the lattice order).
Topkis's theorem ensures \(b_i^{\min}\) is increasing in \(s_{-i}\). Define
\(T:S\to S\) by \(T(s)=(b_1^{\min}(s_{-1}),\dots,b_n^{\min}(s_{-n}))\).

Start from \(s^0=\big(\min S_1,\dots,\min S_n\big)\). Define the sequence
\(s^{t+1}=T(s^t)\). Because each component \(b_i^{\min}\) is increasing in
opponents' actions, \(s^{t+1}\ge s^t\) coordinatewise. Since \(S\) is finite,
there exists \(T\) such that \(s^{T+1}=s^T\). At this fixed point \(s^\ast\)
we have \(s^\ast=T(s^\ast)\), so \(s_i^\ast=b_i^{\min}(s^\ast_{-i})\) for
each \(i\). Thus \(s_i^\ast\in BR_i(s^\ast_{-i})\) and \(s^\ast\) is a Nash
equilibrium.

Termination occurs in at most \(\prod_i |S_i|\) steps, but typically much
faster because the sequence is monotone and the lattice height bounds the
number of steps.
\end{proof}

\subsection{A.8 Numerical examples}

\subsubsection{B.1 Cournot example (discrete quantities)}

\paragraph{Setup.} Two firms \(i=1,2\) choose quantities \(q_i\in\{0,1,2\}\).
Inverse demand \(p(Q)=10-Q\). Cost \(c_i(q_i)=q_i\). Payoff
\(u_i(q)=q_i p(Q)-c_i(q_i)\).

\paragraph{Payoff table.} Compute payoffs for each profile \((q_1,q_2)\):

\[
\begin{array}{c|ccc}
q_1\backslash q_2 & 0 & 1 & 2\\\hline
0 & (0,0) & (0,9) & (0,8)\\
1 & (9,0) & (8,8) & (7,7)\\
2 & (8,0) & (7,7) & (6,6)
\end{array}
\]

\paragraph{Exact potential.} The continuous potential is
\(P(q)=\sum_{t=1}^{Q} p(t) - \sum_i c_i(q_i)\). For discrete small sets we
compute \(P\) values and verify that differences equal unilateral payoff
differences. For example, at \((1,1)\) with \(Q=2\),

\[
P(1,1)=p(1)+p(2)-c_1(1)-c_2(1)=9+8-1-1=15.
\]

At \((2,1)\) with \(Q=3\),

\[
P(2,1)=p(1)+p(2)+p(3)-c_1(2)-c_2(1)=9+8+7-2-1=21.
\]

The difference \(P(2,1)-P(1,1)=6\) equals \(u_1(2,1)-u_1(1,1)\) (verify by
computing payoffs). One checks all unilateral differences match, confirming
(AGG-PI) and exact potential existence.

\paragraph{NE computation.} Compute best replies and find NE: here \((1,1)\)
is a Nash equilibrium (both firms best respond with 1 given the other plays
1). The potential is maximized at \((1,1)\) among feasible profiles in this
small example.

\subsubsection{B.2 Congestion example (Rosenthal)}

\paragraph{Setup.} Three players choose resource \(A\) or \(B\). Latency on
resource \(r\) with load \(k\) is \(\ell_r(k)\). Let \(\ell_A(k)=k\),
\(\ell_B(k)=2k\). Player payoff is negative latency (i.e., they prefer
smaller latency).

\paragraph{Rosenthal potential.} Rosenthal potential is

\[
P(s)=\sum_{r\in\{A,B\}} \sum_{t=1}^{\ell_r(n_r(s))} \ell_r(t),
\]

where \(n_r(s)\) is the number of players choosing resource \(r\). Compute
\(P\) for each profile and verify that unilateral payoff differences equal
differences in \(P\). This is standard and holds because the game is a
congestion game with resource‑separable costs.

\paragraph{NE computation.} Enumerate profiles and find pure NE by best
reply: players prefer resource with smaller latency given others' choices.
Potential minimizers correspond to NE.

\subsubsection{B.3 Public goods counterexample}

\paragraph{Setup.} Two players choose contribution \(c_i\in\{0,1\}\). Public
good level \(G=c_1+c_2\). Payoff \(u_i(c_i,G)=v_i(G)-c_i\) with
\(v_1(1)=3,v_1(2)=4\) and \(v_2(1)=1,v_2(2)=5\) (asymmetric valuations).

\paragraph{Check (AGG-PI).} Compute increments and verify (AGG-PI) fails:
players' marginal benefits differ in ways that violate the equality required
for exact potential. Therefore no exact potential exists. However an ordinal
potential may exist if sign patterns align; check sign of unilateral
differences to determine whether an ordinal potential exists.

\paragraph{NE computation.} Compute best replies and find NE by enumeration.
This example illustrates that aggregative structure alone does not imply
potentiality; symmetry or additional conditions are needed.


\begin{thebibliography}{99}                                                                                               %


\bibitem {AB}Abian, A.,1968. A fixed point theorem. Nieuw Arch. Wiskd. XVI, 184-185

\bibitem {amir}Amir, R., De Castro, L. 2017. Nash equilibrium in games with
quasi-monotonic best-responses. Journal of Economic Theory, vol. 172(C), pages 220-246.

\bibitem {do1}Duersch, P., Oechssler, J., Schipper, B. 2012a. Unbeatable
imitation. Games and Economic Behavior, 76, 88--96.

\bibitem {do2}Duersch, P., Oechssler, J., Schipper, B. 2012b. Pure strategy
equilibria in symmetric two-player zero-sum games. Int J Game Theory, 41, 553--564.

\bibitem {fab}Fabrikant, A., Jaggard, AD., SchAMPra, M. 2013. On the structure
of weakly acyclic games. Theory Comput Syst, 53, 107--122.

\bibitem {fan}Fan, K., 1952. Fixed-point and minimax theorems in locally
convex topological linear spaces. Proc. Natl. Acad. Sci. USA 38, 121--126.

\bibitem {Im1}Iimura, T., Watanabe, T. 2014. Existence of a pure strategy
equilibrium in finite symmetric games where payoff functions are integrally
concave. Discrete Applied Mathematics 166 (2014) 26--33.

\bibitem {Im2}Iimura, T., Watanabe T. 2016. Pure strategy equilibrium in
finite weakly unilaterally competitive games. Int J Game Theory 45, 719--729.

\bibitem {Im3}Iimura, T., Maruta T., Watanabe T. 2019. Equilibria in games
with weak payoff externalities. Econ Theory Bull, 7, 245--258.

\bibitem {Im4}Iimura, T. 2020. Unilaterally competitive games with more than
two players. International Journal of Game Theory, 49, 681--697.

\bibitem {kat}Kats A., Thisse, J-F. 1992. Unilaterally competitive games. Int
J Game Theory 21:291--9

\bibitem {ku2}Kukushkin, N.,1994. A fixed-point theorem for decreasing
mappings. Economics Letters, 46:23--26.

\bibitem {ku}Kukushkin, N., 2004. Best response dynamics in finite games with
additive aggregation. Games and Economic Behavior, 48:94--110.

\bibitem {ku1}Kukushkin, N., 2007. Acyclicity and aggregation in strategic
games. Russian Academy of Sciences, Dorodnicyn Computing Center, Moscow

\bibitem{luc} Luckraz, S., 2014. A note on the relationship between the isotone assumption of the Abian-Brown fixed point theorem and Abian's most basic fixed point theorem. Fixed Point Theory and Applications. 129, 1-7.

\bibitem{luc2} Luckraz, S., 2022a. On a Fixed Point Theorem for General Multivalued Mappings on Finite Sets with Applications in Game Theory. Journal of Mathematics. 2022 (1), 2477124.

\bibitem{luc3} Luckraz, S., 2022b. Two remarks on the infinite approximation of a finite world in economic models. Journal of Mathematics. 2022 (1), 2457746.

\bibitem{luc4} Luckraz, S., 2025. A note on the characterization of stable matchings for general preferences: a fixed point approach. Fixed Point Theory.
 26 (1).
 \bibitem{LuckrazCopsRobbersSurvey}
Luckraz, S.
\newblock A survey on the relationship between the game of cops and robbers and other game representations.
\newblock \textit{Dynamic Games and Applications}, 2019 \textbf{9} (2), 506--520.

\bibitem {shapley}Monderer, D; Shapley, L.,1996. Potential Games. Games and
Economic Behavior, vol. 14, pp. 124-143.

\bibitem {nash}Nash, J., 1951. Non-cooperative games. Ann. of Math. (2) 54, 286--295.

\bibitem {nash1}Nash Jr., J.F., 1950. Equilibrium points in n-person games.
Proc. Natl. Acad. Sci. USA 36, 48--49.

\bibitem {reny}Reny, P.J., 1999. On the existence of pure and mixed strategy
Nash equilibria in discontinuous games. Econometrica 67, 1029--1056.

\bibitem {rose}Rosenthal, R., 1973. A class of games possessing pure-strategy
Nash equilibria. International Journal of Game Theory, 2:65--67.

\bibitem {tar}Tarski, A., 1955. A lattice-theoretical fixed-point theorem and
its applications. Pac. J. Math.. 5, 285-309.

\bibitem {tian}Tian., G., 2009. Existence of equilibria in games with
arbitrary spaces and payoffs: a full characterization. Working paper, Texas
A\&M University.

\bibitem {topkis}Topkis, D. 1998. Supermodularity and Complementarity.
Princeton University Press.

\bibitem {vives}Vives, X. 1990. Nash Equilibrium with Strategic
Complementarities,\textquotedblright J. Math. Econ, 19, 305---321.
\end{thebibliography}
\end{document}